\documentclass{ifacconf}

\usepackage{amsmath,amssymb,amsfonts}
\usepackage{graphicx}
\usepackage{booktabs}
\usepackage{algorithm}
\usepackage{algpseudocode}
\usepackage{mathtools,color,comment,mathrsfs,url}
\usepackage[short]{optidef}
\usepackage{natbib}

\DeclareMathOperator*{\argmin}{arg\,min}

\newcommand{\diag}{\operatorname{diag}}

\newcommand{\D}{\mathbb D}
\newcommand{\C}{\mathbb C}

\newcommand{\RE}{\mathrm{RE}}
\newcommand{\NP}{\mathrm{NP}}
\newcommand{\supp}{\operatorname{supp}}
\newtheorem{theorem}{Theorem}

\newtheorem{proposition}{Proposition}

\newtheorem{definition}{Definition}

\makeatletter
\@ifundefined{proof}{%
  \newenvironment{proof}[1][Proof]{\par\noindent\textit{#1.}\ }{\hfill$\square$\par}%
}{}
\makeatother

\begin{document}

\begin{frontmatter}

\title{Randomized Atomic Feature Models for Physics-Informed Identification of Dynamic Systems}

\author[First]{Rajiv Singh}
\author[Second]{Mario Sznaier}
\author[Third]{Lennart Ljung}

\address[First]{The MathWorks Inc., 1 Apple Hill Drive, Natick, MA 01760 USA}
\address[Second]{ECE Dept., Northeastern University, Boston, MA 02115, USA}
\address[Third]{Div. of Automatic Control, Link\"{o}ping University, Sweden}

\begin{abstract}
We present a physics-informed framework for system identification based on \emph{randomized stable atomic features}.  Impulse responses are represented as random superpositions of stable atoms, namely damped complex exponentials associated with poles sampled inside a prescribed disk.  Identification is then cast as a convex regularized least-squares problem with optional linear, second-order-cone, and KYP constraints.  The approach generalizes random Fourier and random Laplace features to the damped, nonstationary regime relevant to engineering systems while retaining modal interpretability and scalable finite-dimensional computation.  The main analytic point is an operator-theoretic Disk--Bochner viewpoint: positive measures over stable poles generate positive-definite kernels with a radius-dependent shift defect, while a converse scalar disk moment representation for an arbitrary kernel is characterized by subnormality of the canonical shift.  We prove this statement, establish an RKHS-to-$\ell_1$ embedding, show that sampled poles induce a valid finite atomic gauge, discuss random-feature convergence, and state sparse-recovery guarantees conditionally on the restricted-eigenvalue properties of the realized disk-Vandermonde or input-output design matrix.  We also connect the normalized transfer-function problem to Nevanlinna--Pick interpolation and LFT set-membership.  The framework directly encodes stability margins, modal localization, DC-gain bounds, monotonicity, passivity, relative degree, settling-time targets, and time/frequency-domain error bounds.  Numerical comparisons illustrate how physically meaningful priors can compensate for poor excitation and improve constrained impulse-response recovery in an under-informative data setting.
\end{abstract}

\begin{keyword}
System Identification; Random Features; Atomic Norm; Kernel Methods; AMLS Kernels; KYP Lemma; Convex Optimization; Nevanlinna--Pick interpolation.
\end{keyword}

\end{frontmatter}

\section{Introduction}\label{sec:intro}
System identification lies at the intersection of data-driven learning and physical modeling.  Modern methods can be organized along a continuum between \emph{finite-dictionary sparse regression} and \emph{infinite-dimensional kernel regularization}.  At one end, sparse regression techniques, such as the Sparse Identification of Nonlinear Dynamics \citep{BRUNTON2016710} and related dictionary-based formulations, represent dynamics as linear combinations of preselected basis functions.  These approaches are attractive because, once the dictionary is fixed, estimation reduces to linear least squares with sparsity-promoting penalties.  However, the flexibility of such methods is also their burden: physical priors such as stability, passivity, DC-gain bounds, known time constants, resonant frequency bands, monotonicity, or settling-time requirements must be encoded by hand through custom basis functions or additional constraints.  The resulting optimization problem can quickly lose the simplicity that made the original dictionary attractive.

At the other end, kernel-based methods represent impulse responses or nonlinear maps as elements of a reproducing kernel Hilbert space (RKHS) associated with a positive-definite kernel.  For linear system identification, kernels such as tuned-correlated (TC), diagonal-correlated (DC), stable-spline, amplitude-modulated locally stationary (AMLS), and simulation-induced kernels encode stability, smoothness, and correlation decay directly in the regularizer; their hyperparameters can often be selected by empirical Bayes or cross-validation \citep{pillonetto2014kernel,CHEN2018109,chen2019si}.  Kernel methods are powerful because they handle very large or infinite dictionaries implicitly.  The price is that the atoms are hidden, the Gram matrix scales with the number of data points, and it can be difficult to impose several hard engineering constraints simultaneously without redesigning the kernel or solving hybrid constrained problems.

Random feature methods bridge these viewpoints.  By classical Bochner theory, stationary positive-definite kernels can be represented as Fourier transforms of positive measures and approximated by random Fourier features \citep{rahimi2007random}.  Semigroup kernels admit analogous random Laplace features \citep{Yang2014}.  These constructions reduce kernel learning to finite-dimensional linear regression while preserving probabilistic approximation guarantees.  However, Fourier features describe undamped oscillations on the unit circle, and Laplace features describe purely real decays.  Neither geometry by itself naturally captures the stable oscillatory modes that dominate many engineering systems.  A sampled point $p=re^{i\theta}$ inside the unit disk represents both damping, through $r$, and oscillation, through $\theta$.  This observation motivates randomized atomic features over the disk.

A related literature is atomic norm minimization for system identification, where atoms are impulse responses of stable first-order systems or damped sinusoids \citep{Recht2012,Yilmaz2018}.  Randomized active-set and fully corrective Frank--Wolfe methods have shown that sampling candidate poles inside the disk can yield scalable parsimonious identification, including for MIMO systems \citep{MillerATOM:2020}.  The present paper provides a kernel-learning and operator-theoretic viewpoint on such constructions.  Each sampled pole $p$ gives an explicit atom $\phi_p(t)=p^t$.  Randomized sampling approximates an underlying disk-supported moment kernel, while convex penalties and constraints select a low-complexity model consistent with measured data and physical side information.

The geometric relationship is summarized in Fig.~\ref{fig:bochner_triad}.  The Fourier case places mass on the unit circle and produces shift-invariant kernels.  The Laplace case places mass on the positive real axis after a logarithmic change of variables and produces semigroup kernels.  Disk-supported pole measures unify damping and oscillation and produce kernels of the form
\begin{equation}\label{eq:intro-kernel}
K(s,t)=\int_{|p|\le\rho}p^s\overline p^{\,t}\,d\mu(p),
\qquad s,t\in\mathbb N_0 .
\end{equation}
These kernels are positive definite and automatically satisfy a radius-dependent shift-defect inequality.  A central nuance is that the converse is not true under shift-contractivity alone.  A scalar disk moment representation requires the canonical shift associated with the kernel to be subnormal.  This distinction is important because a radius defect gives contractivity of the canonical shift, whereas a scalar pole-measure representation requires the stronger spectral structure supplied by subnormality.

\begin{figure}
\centering
\includegraphics[width=0.99\linewidth]{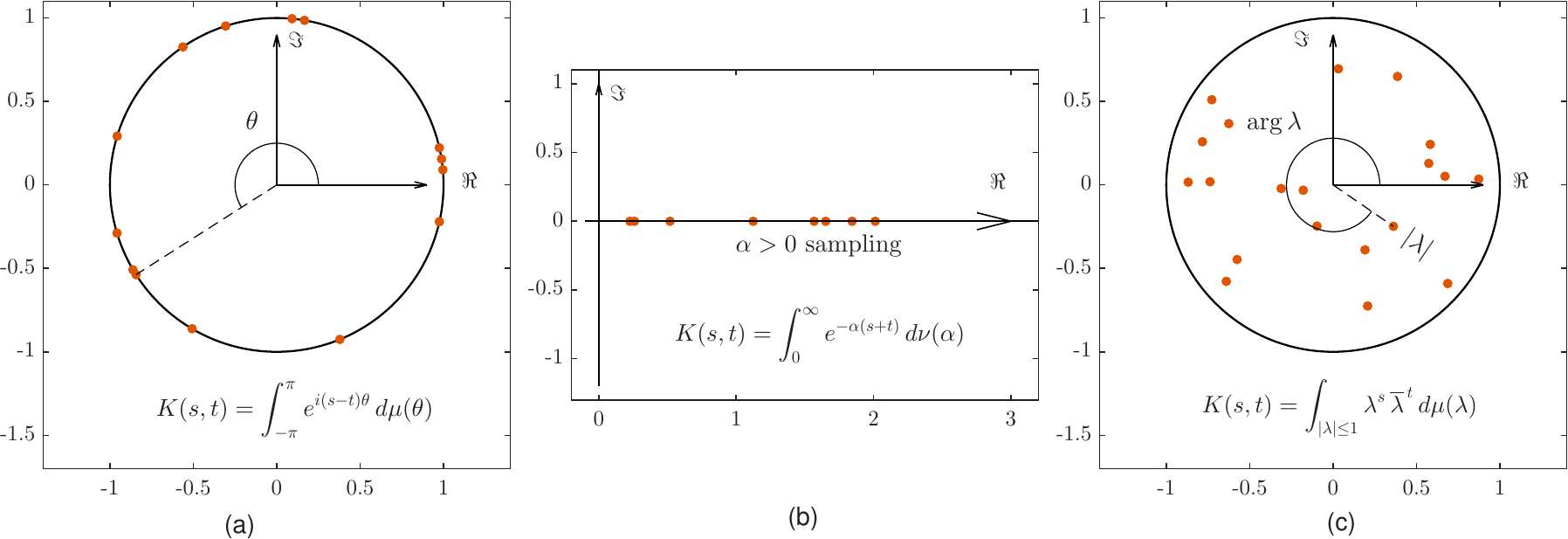}
\caption{Unified view of kernel representations. (a) Fourier/shift-invariant kernels on the unit circle. (b) Laplace/semigroup kernels on the real axis. (c) Disk-supported kernels generated by stable atomic features. Each construction samples a different complex-domain geometry.}
\label{fig:bochner_triad}
\end{figure}

\subsection*{Contributions}
The paper makes five contributions.  First, it unifies finite modal dictionaries, kernel regularization, and randomized atomic features through disk-supported pole measures.  Second, it states and proves a Disk--Bochner theorem for subnormal kernel shifts: positive disk measures imply positive definiteness and a radius defect, and the converse scalar disk moment representation is characterized by subnormality of the canonical shift.  Third, it develops the Nevanlinna--Pick connection in a consistent disk-analytic coordinate: NP positivity gives an exact Schur set-membership and LFT parametrization for gain-normalized transfer functions, while radius constraints are handled separately through pole restrictions or kernel radius defects.  Fourth, it gives stability, approximation, atomic-gauge, and sparse-recovery statements appropriate for RAF models, including the role of radial as well as angular separation in disk-Vandermonde conditioning.  Fifth, it shows how many engineering priors can be imposed by sampling laws or convex constraints in a single master problem.

\section{Randomized Stable Atomic Features}\label{sec:RAF}
Consider discrete-time SISO data $\{(u(t),y(t))\}_{t=0}^{T-1}$ generated approximately by
\begin{equation}\label{eq:conv-model}
y(t)=\sum_{\tau\ge0}h(\tau)u(t-\tau)+e(t),
\end{equation}
where $h$ is a stable impulse response and $e$ represents noise, unmodeled dynamics, or both.  The RAF model approximates $h$ by
\begin{equation}\label{eq:model}
h(t)=\sum_{m=1}^M c_m p_m^t,
\qquad |p_m|\le\rho<1,
\end{equation}
where the pole locations $p_m$ are sampled from a prescribed distribution over an admissible disk or sector, and the residues $c_m$ are learned from data.  For real-valued impulse responses, complex poles and residues are paired with their conjugates.

For direct impulse-response fitting, one forms the Vandermonde matrix
\begin{equation}\label{eq:Phi}
\Phi_{t,m}=p_m^t,
\qquad t=0,\ldots,T-1,
\end{equation}
and estimates $c$ from impulse-response samples.  For input-output data, the relevant design matrix is instead
\begin{equation}\label{eq:Xconv}
X_{t,m}=(\phi_m*u)(t),
\qquad \phi_m(t)=p_m^t,
\end{equation}
so that $y_{\rm model}=Xc+Du$ when a feedthrough term is included.  The basic estimator is
\begin{equation}\label{eq:raf-opt}
(\widehat c,\widehat D)\in\argmin_{c,D} \frac12\|y-Xc-Du\|_2^2+\lambda\Omega(c),
\end{equation}
where $D$ may be fixed or optimized and $\Omega$ may be an $\ell_1$ penalty, a group penalty over conjugate pairs, a channel-structured penalty, or a reweighted sparsity surrogate.  The key point is that the pole locations are fixed during the convex stage.  Nonconvex pole refinement can be applied later, but the main identification problem remains a regularized convex regression with interpretable columns.

\subsection{Atomic gauge induced by sampled poles}\label{subsec:atomic-gauge}
The use of the term ``atomic'' is justified by the gauge generated by the pole atoms.  For a compact admissible pole set $\Omega_p\subset\{p\in\C:|p|\le\rho<1\}$ and a finite horizon $T$, define
\begin{equation}\label{eq:finite-atom}
a_T(p)=[1\;p\;\cdots\;p^{T-1}]^\top,
\qquad
\widetilde a_T(p)=\frac{a_T(p)}{\eta_T(p)},
\end{equation}
where one may take $\eta_T(p)=\|a_T(p)\|_2$.  Other positive normalizations are possible and simply encode different radial weights.  The complex-balanced atomic set is
\begin{equation}\label{eq:disk-atomic-set}
\mathcal A_T(\Omega_p)=
\left\{e^{i\theta}\widetilde a_T(p):p\in\Omega_p,\;\theta\in[0,2\pi)\right\}.
\end{equation}
Its Minkowski functional is
\begin{equation}\label{eq:continuous-gauge}
\gamma_{\Omega_p,T}(h)=
\inf\left\{\sum_k |d_k|:
 h=\sum_k d_k\widetilde a_T(p_k),\;p_k\in\Omega_p\right\}.
\end{equation}

\begin{proposition}[Validity of the disk atomic gauge]\label{prop:gauge}
$\gamma_{\Omega_p,T}$ is a closed convex gauge and is a norm on $\operatorname{span}\mathcal A_T(\Omega_p)$.  For sampled poles $p_1,\ldots,p_M$, the finite RAF gauge
\begin{equation}\label{eq:finite-gauge}
\gamma_{M,T}(h)=
\inf\left\{\sum_{m=1}^M |d_m|:
 h=\sum_{m=1}^M d_m\widetilde a_T(p_m)\right\}
\end{equation}
is likewise a closed convex gauge and a norm on the span of the sampled atoms.
\end{proposition}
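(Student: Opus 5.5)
The plan is to identify $\gamma_{\Omega_p,T}$ with the Minkowski functional of the closed convex hull $C=\overline{\operatorname{conv}}\,\mathcal A_T(\Omega_p)$ and then read off the gauge and norm properties from the geometry of $C$.

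\emph{Step 1: compactness of the atomic set.} The map $(p,\theta)\mapsto e^{i\theta}\widetilde a_T(p)$ is continuous on the compact set $\Omega_p\times[0,2\pi]$. Continuity holds because $\eta_T(p)=\|a_T(p)\|_2\ge 1$: the first entry of $a_T(p)$ equals $1$, so the normalization never degenerates. Hence $\mathcal A_T(\Omega_p)$ is compact in $\C^T$. It is also balanced under multiplication by unimodular scalars. By Carath\'eodory's theorem in finite dimension, $\operatorname{conv}\mathcal A_T(\Omega_p)$ is already compact, so no closure is needed. Every element of this hull has norm exactly $1$ at each atom and at most $1$ overall.

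\emph{Step 2: equality of the two functionals.} I will show that $\gamma_{\Omega_p,T}(h)=\inf\{t>0:h\in tC\}$, with the convention that the infimum is $+\infty$ off the span. Suppose $h=\sum_k d_k\widetilde a_T(p_k)$. Write $d_k=|d_k|e^{i\theta_k}$ and set $s=\sum_k|d_k|$. Then $h/s$ is a convex combination of atoms, so $h\in sC$. Conversely, Carath\'eodory expresses any point of $tC$ as a finite sum of this form with $\sum_k|d_k|\le t$. Minkowski functionals of convex sets containing $0$ are positively homogeneous and subadditive, so $\gamma$ is convex. It is also absolutely homogeneous over $\C$, because $C$ is circled. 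Closedness means lower semicontinuity, and this follows from compactness of $C$: the sublevel sets $\{h:\gamma(h)\le t\}=tC$ are closed, including on the complement of the span, where the value is $+\infty$.

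\emph{Step 3: norm on the span.} On $V=\operatorname{span}\mathcal A_T(\Omega_p)$, finite-dimensionality and Carath\'eodory show that $\gamma$ is finite. Definiteness is the step I expect to need care, and I will handle it by comparison with the Euclidean norm. For any representation, $\|h\|_2\le\sum_k|d_k|\,\|\widetilde a_T(p_k)\|_2=\sum_k|d_k|$, so $\gamma(h)\ge\|h\|_2$. Hence $\gamma(h)=0$ forces $h=0$. Together with homogeneity and the triangle inequality from Step 2, this makes $\gamma$ a norm on $V$.

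\emph{Step 4: the finite case.} The finite RAF gauge is the special case $\Omega_p=\{p_1,\ldots,p_M\}$, which is a compact set. Alternatively, one can observe directly that $\gamma_{M,T}(h)=\min\{\|d\|_1:\widetilde\Phi d=h\}$, where $\widetilde\Phi$ is the normalized Vandermonde matrix. This is a linear program whose feasible set is nonempty exactly on $\operatorname{range}\widetilde\Phi$, and the minimum is attained. The resulting function is the infimal projection of the norm $\|\cdot\|_1$ through the linear map $\widetilde\Phi$, which is a standard closed convex construction in finite dimension. The same bound $\gamma_{M,T}\ge\|\cdot\|_2$ gives definiteness. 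This holds even if repeated poles or $M>T$ make $\widetilde\Phi$ rank-deficient, which is why the argument avoids any injectivity assumption.
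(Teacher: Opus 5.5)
Your proof is correct and follows the same route as the paper. Both arguments use compactness and balancedness of $\mathcal A_T(\Omega_p)$, identify $\gamma_{\Omega_p,T}$ with the Minkowski functional of its compact convex hull, and treat the finite case as $\Omega_p=\{p_1,\ldots,p_M\}$; your Carath\'eodory argument and the bound $\gamma\ge\|\cdot\|_2$ supply the hull compactness and the definiteness that the paper only asserts, and for complex $d$ the problem $\min\{\|d\|_1:\widetilde\Phi d=h\}$ is a second-order cone program rather than a linear program, which does not affect attainment.
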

\begin{proof}
The map $p\mapsto\widetilde a_T(p)$ is continuous and $\Omega_p$ is compact, hence $\mathcal A_T(\Omega_p)$ is compact.  Because all phases are included, the set is balanced.  Therefore $\operatorname{conv}\mathcal A_T(\Omega_p)$ is compact, convex, balanced, and contains the origin.  Its Minkowski functional is a closed convex gauge.  Restricted to the linear span of the atoms it is finite and positive definite, hence a norm.  The finite sampled case is the same argument with a finite compact set.
\end{proof}

Thus, after the sampled poles are fixed, the $\ell_1$ penalty in \eqref{eq:raf-opt} is exactly a finite atomic-gauge penalty.  Since the sampled atom set is a subset of the continuous disk atom set, $\gamma_{M,T}$ is an inner approximation of $\gamma_{\Omega_p,T}$ and generally satisfies $\gamma_{M,T}(h)\ge\gamma_{\Omega_p,T}(h)$ on their common domain.  Dense resampling improves this approximation, but equality with the continuous atomic norm is not automatic.  If unnormalized columns $a_T(p_m)$ are used in the solver, the same statement holds with the corresponding radius-dependent weights absorbed into the coefficients; column normalization is recommended when sparsity weights should not depend implicitly on atom energy.

The random-feature interpretation arises by drawing $p_m\sim\mu$ and considering the empirical kernel
\begin{equation}\label{eq:empirical-kernel}
\widehat K_M(s,t)=\frac1M\sum_{m=1}^M p_m^s\overline p_m^{\,t}.
\end{equation}
This is an unbiased approximation of the disk-supported kernel 
\begin{equation}\label{eq:Kmu-short}
K_\mu(s,t)=\int p^s\overline p^{\,t}\,d\mu(p).
\end{equation}
Thus RAFs are explicit Monte Carlo features for a stable modal RKHS.  Unlike a purely kernelized estimator, the columns of $X$ have direct physical meaning: each column is the response of a stable first-order mode to the measured input.  This makes it natural to impose modal priors through the sampling law, such as annular regions for time constants or sectors for resonant bands, and to impose response-level priors through convex constraints on $c$ and $D$.

\subsection{Relationship with AMLS kernels}
Writing $p=re^{i\theta}$ and setting $\alpha=-\log r$, $\omega=\theta$ gives
\begin{equation}\label{eq:Knu}
K_\nu(s,t)=\iint e^{-\alpha(s+t)}e^{i\omega(s-t)}\,d\nu(\alpha,\omega),
\end{equation}
where $\nu$ is the pushforward of $\mu$ under $p\mapsto(\alpha,\omega)$.  This is a continuous mixture of locally stationary kernels whose stationary part depends on $s-t$ and whose amplitude modulation depends on $s+t$.  In this sense, disk-supported pole mixtures provide an explicit random-feature representation for an AMLS-like class of kernels \citep{CHEN2018109}.  The disk view is useful because it exposes the pole geometry directly, while the AMLS view connects the construction to existing kernel design methods for regularized system identification.

\section{Kernel Properties and Random-Feature Approximation}\label{sec:kernel-props}
This section records the basic properties of disk-supported RAF kernels.  The proofs are included because the details clarify which statements are consequences of positive measures and which require additional operator-theoretic structure.

\begin{theorem}[PSD and strict PD]\label{thm:psd}
For any finite positive Borel measure $\mu$ on $\overline\D_\rho$, the kernel
\begin{equation}\label{eq:Kmu}
K_\mu(s,t)=\int_{|p|\le\rho} p^s\overline p^{\,t}\,d\mu(p)
\end{equation}
is Hermitian positive semidefinite.  If $\supp(\mu)$ has a limit point in $\D_\rho$ and the time indices are distinct, then $K_\mu$ is strictly positive definite on every finite set of such indices.
\end{theorem}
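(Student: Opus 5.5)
The plan is to reduce everything to the quadratic form. Fix distinct indices $t_1,\dots,t_n\in\mathbb N_0$ and coefficients $c\in\C^n$. Hermitian symmetry is immediate from
\[
\overline{K_\mu(t,s)}=\int \overline{p^t}\,p^s\,d\mu(p)=K_\mu(s,t).
\]
The integrand is bounded by $\rho^{s+t}\le 1$ on $\overline\D_\rho$, and $\mu$ is finite, so every integral is finite. Exchanging the finite sum with the integral gives
\[
\sum_{j,k}\overline{c_j}\,c_k\,K_\mu(t_j,t_k)=\int_{|p|\le\rho}\Bigl|\sum_{k} c_k\,\overline p^{\,t_k}\Bigr|^2 d\mu(p)\;\ge\;0 .
\]
The sesquilinear convention must be chosen so that the integrand is the modulus squared of a single function; with the other convention one gets $|\sum_k \overline{c_k}p^{t_k}|^2$, which is equivalent. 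This proves positive semidefiniteness.

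For strict definiteness, suppose the quadratic form vanishes for some $c\ne 0$. Then $q(p):=\sum_k c_k p^{t_k}$, or its conjugate, vanishes $\mu$-a.e. Since $q$ is continuous, it vanishes on all of $\supp(\mu)$: if $q(p_0)\neq0$ for some $p_0\in\supp\mu$, then $|q|>0$ on a neighborhood of $p_0$, and that neighborhood has positive $\mu$-measure, a contradiction. Because the $t_k$ are distinct and $c\ne0$, $q$ is a nonzero polynomial and hence has finitely many zeros. But $\supp(\mu)$ has a limit point in $\D_\rho$, so it is infinite. This contradicts the identity theorem, or simply the finiteness of the zero set. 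Hence the Gram matrix $[K_\mu(t_j,t_k)]$ is positive definite.

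I expect the main care point to be the passage from "$q=0$ $\mu$-a.e." to "$q=0$ on $\supp\mu$", which uses continuity together with the definition of support. I would also note explicitly that the limit-point hypothesis is stronger than needed: an infinite support, or even support of cardinality at least $\max_k t_k+1$, suffices. Distinctness of the indices is exactly what makes $q\not\equiv0$ whenever $c\neq0$, since the monomials $p^{t_k}$ are then linearly independent.
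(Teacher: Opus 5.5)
Your proof is correct and follows the paper's argument. You write the quadratic form as $\int|\sum_k c_k\overline p^{\,t_k}|^2\,d\mu\ge0$, then show that a vanishing form would force a nonzero polynomial (nonzero because the indices are distinct) to vanish on an infinite subset of $\supp(\mu)$, which is impossible. Two points are slightly sharper than in the paper: you supply the continuity step from ``$\mu$-a.e.'' to ``on $\supp(\mu)$'', which the paper skips when it invokes the identity theorem, and you observe that infinite support (indeed at least $\max_k t_k+1$ support points) already suffices.
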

\begin{proof}
For finite coefficients $a_i$ at times $t_i$,
\begin{align*}
\sum_{i,j}a_i\overline a_jK_\mu(t_i,t_j)
&=\int_{|p|\le\rho}\sum_{i,j}a_i\overline a_jp^{t_i}\overline p^{\,t_j}\,d\mu(p)\\
&=\int_{|p|\le\rho}\left|\sum_i a_ip^{t_i}\right|^2d\mu(p)\ge0.
\end{align*}
If the quadratic form is zero, the polynomial $f(p)=\sum_i a_ip^{t_i}$ vanishes $\mu$-almost everywhere.  If the support has an accumulation point in the open disk, the identity theorem for analytic functions forces $f\equiv0$.  Since monomials with distinct exponents are linearly independent, all $a_i=0$.
\end{proof}

\begin{theorem}[Uniform convergence of RAF kernels]\label{thm:uniform}
Let $\mu$ be a probability measure supported on $|p|\le\rho\le1$ and let $p_1,\ldots,p_M$ be i.i.d. samples from $\mu$.  For any finite horizon $T$ and $\varepsilon>0$,
\begin{align}
&\Pr\left(
\max_{0\le s,t\le T}
|\widehat K_M(s,t)-K_\mu(s,t)|>\varepsilon
\right)\nonumber\\
&\qquad\le 4(T+1)^2\exp\left(-M\varepsilon^2/8\right).
\end{align}
\end{theorem}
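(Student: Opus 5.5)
Fix a pair of indices $(s,t)$ with $0\le s,t\le T$, obtain a tail bound for that single entry, and then take a union bound over the $(T+1)^2$ entries. For fixed $(s,t)$ define the i.i.d.\ complex random variables $Z_m=p_m^s\overline p_m^{\,t}$, $m=1,\ldots,M$. Since $|p_m|\le\rho\le1$ almost surely, $|Z_m|\le\rho^{s+t}\le1$. By definition of $K_\mu$ in \eqref{eq:Kmu}, $\mathbb E Z_m=K_\mu(s,t)$, so $\widehat K_M(s,t)-K_\mu(s,t)=\frac1M\sum_m (Z_m-\mathbb E Z_m)$, which is the unbiasedness already noted after \eqref{eq:empirical-kernel}.

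Because the $Z_m$ are complex, I will split them into real and imaginary parts. Write $Z_m=X_m+iY_m$ with $X_m,Y_m\in[-1,1]$. If $|\widehat K_M-K_\mu|>\varepsilon$, then either the real-part deviation or the imaginary-part deviation exceeds $\varepsilon/\sqrt2$. Hoeffding's inequality for variables with range length $2$ gives, for each part,
\[
\Pr\Bigl(\Bigl|\tfrac1M\textstyle\sum_m (X_m-\mathbb E X_m)\Bigr|>\varepsilon/\sqrt2\Bigr)\le 2\exp\bigl(-2M(\varepsilon^2/2)/4\bigr)=2\exp(-M\varepsilon^2/4),
\]
and the same bound holds for the $Y_m$. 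Hence the per-entry probability is at most $4\exp(-M\varepsilon^2/4)$, which is no larger than the claimed $4\exp(-M\varepsilon^2/8)$. The stated constant therefore has slack, and a cruder route also suffices, for example a single complex-valued Hoeffding or McDiarmid argument with bounded differences $2/M$.

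Finally, the union bound over all pairs $(s,t)\in\{0,\ldots,T\}^2$ multiplies the per-entry bound by $(T+1)^2$, giving $4(T+1)^2\exp(-M\varepsilon^2/8)$. Hermitian symmetry would roughly halve the count, but this is not needed. The only step requiring care is the reduction from complex to real deviations together with tracking Hoeffding's range constant. I expect no real obstacle, because the uniform bound $|Z_m|\le1$ follows directly from $\rho\le1$, and this uniform boundedness is all the argument uses.
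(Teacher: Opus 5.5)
Your proposal is correct and is essentially the paper's argument: you bound $|p^s\overline p^{\,t}|\le1$ using $\rho\le1$, apply Hoeffding separately to the real and imaginary parts, and take a union bound over the $(T+1)^2$ index pairs. Your explicit bookkeeping (threshold $\varepsilon/\sqrt2$ per part, range length $2$) gives the sharper per-entry bound $4\exp(-M\varepsilon^2/4)$, which confirms the paper's remark that its constants are not optimized.
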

\begin{proof}
For every fixed pair $(s,t)$, the random variable $p^s\bar p^{\,t}$ is complex-valued and has magnitude at most one when $\rho\le1$.  Apply Hoeffding's inequality separately to the real and imaginary parts and then use a union bound over the $(T+1)^2$ entries.  Constants are not optimized; the statement is intended as a finite-horizon Monte Carlo guarantee analogous to random Fourier and random Laplace feature bounds \citep{rahimi2007random,Yang2014}.
\end{proof}

\begin{theorem}[RKHS-to-$\ell_1$ embedding]\label{thm:l1}
Let $K_\mu$ be the disk kernel in \eqref{eq:Kmu}, where $\mu$ is a finite positive Borel measure over pole locations.  If $\sup\{|p|:p\in\supp(\mu)\}=\rho<1$ and $\|\mu\|<\infty$, then every $h\in\mathcal H(K_\mu)$ satisfies
\begin{equation}\label{eq:l1-embedding}
\sum_{t=0}^{\infty}|h(t)|\le\frac{\sqrt{\|\mu\|}}{1-\rho}\|h\|_{\mathcal H(K_\mu)}.
\end{equation}
Consequently, functions in $\mathcal H(K_\mu)$ are BIBO-stable impulse responses.
\end{theorem}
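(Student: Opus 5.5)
Our plan is to bound each sample $|h(t)|$ through the reproducing property and Cauchy--Schwarz, and then sum over $t$. For $h\in\mathcal H(K_\mu)$ and every $t\in\mathbb N_0$ we have $h(t)=\langle h,K_\mu(\cdot,t)\rangle_{\mathcal H(K_\mu)}$. Hence
\[
|h(t)|\le\|h\|_{\mathcal H(K_\mu)}\sqrt{K_\mu(t,t)} .
\]
Because the kernel is given by \eqref{eq:Kmu}, the diagonal is easy to control:
\[
K_\mu(t,t)=\int|p|^{2t}\,d\mu(p)\le\rho^{2t}\|\mu\| .
\]
This holds since $\mu$ is concentrated on $\{|p|\le\rho\}$, with $\rho=\sup\{|p|:p\in\supp(\mu)\}$. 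Therefore $|h(t)|\le\sqrt{\|\mu\|}\,\rho^t\|h\|_{\mathcal H(K_\mu)}$. Summing the geometric series $\sum_{t\ge0}\rho^t=1/(1-\rho)$, which converges because $\rho<1$, gives \eqref{eq:l1-embedding} exactly.

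Before this computation we need a little bookkeeping. Theorem~\ref{thm:psd} shows that $K_\mu$ is a Hermitian positive semidefinite kernel on $\mathbb N_0$, so the Moore--Aronszajn construction produces a well-defined RKHS $\mathcal H(K_\mu)$ of functions on $\mathbb N_0$ with the reproducing property. That reproducing property is the only structural fact the argument uses. One could also use the feature-map description, in which $h(t)=\int p^t g(p)\,d\mu(p)$ with $\|h\|=\min\|g\|_{L^2(\mu)}$, and apply Cauchy--Schwarz in $L^2(\mu)$. This gives the same bound, but the reproducing-kernel route avoids having to identify the RKHS norm explicitly.

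For the BIBO conclusion, note that an impulse response in $\ell_1$ defines a convolution operator on $\ell_\infty$ with gain at most $\|h\|_1$. This follows from $|\sum_\tau h(\tau)u(t-\tau)|\le\|h\|_1\|u\|_\infty$.

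The main obstacle, and it is a mild one, is justifying the diagonal bound $|p|^{2t}\le\rho^{2t}$ $\mu$-almost everywhere. This follows from the definition of the support: $\mu$ gives zero mass to the complement of $\supp(\mu)$, and $\supp(\mu)$ is contained in the closed disk of radius $\rho$. The bound covers the $t=0$ case with the convention $0^0=1$. It also holds even when $\mu$ has an atom at the origin.
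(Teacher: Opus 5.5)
Your proposal is correct and follows essentially the same route as the paper. The reproducing property and Cauchy--Schwarz give $|h(t)|\le\|h\|_{\mathcal H(K_\mu)}\sqrt{K_\mu(t,t)}$, the support of $\mu$ gives $K_\mu(t,t)\le\|\mu\|\rho^{2t}$, and summing the geometric series yields \eqref{eq:l1-embedding}; your extra bookkeeping (existence of the RKHS via Theorem~\ref{thm:psd}, the $\mu$-a.e.\ justification of $|p|\le\rho$, and the explicit convolution bound $\|h*u\|_\infty\le\|h\|_1\|u\|_\infty$ for BIBO stability) makes explicit what the paper leaves implicit.
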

\begin{proof}
By the reproducing property and Cauchy--Schwarz,
\begin{equation*}
|h(t)|\le\|h\|_{\mathcal H(K_\mu)}\sqrt{K_\mu(t,t)}.
\end{equation*}
Since
\begin{equation*}
K_\mu(t,t)=\int |p|^{2t}d\mu(p)\le\|\mu\|\rho^{2t},
\end{equation*}
summing the resulting geometric bound over $t$ gives \eqref{eq:l1-embedding}.  Here $\mu$ is the pole measure that defines $K_\mu$ in \eqref{eq:Kmu}.  The proof uses only its total mass $\|\mu\|$ and its support radius $\rho$; it does not require $\mu$ to be discrete.  Thus the same bound applies to finite atomic measures, such as $\mu=\sum_j w_j\delta_{p_j}$ arising from sampled RAF dictionaries, and to non-atomic or continuous pole distributions used to define limiting kernels.
\end{proof}

\section{A Disk--Bochner Theorem with Subnormal Kernel Shifts}\label{sec:diskbochner}
The name ``Disk--Bochner'' is meant to emphasize an analogy with classical Bochner and Herglotz representations: positive kernels can sometimes be represented as moments of positive measures on a spectral domain.  For RAFs the spectral domain is a disk.  The forward direction is elementary and is exactly what the algorithm uses.  The converse requires care.

\begin{definition}[Canonical Hilbert space and shift]\label{def:canonical}
Let $K$ be a Hermitian PSD kernel on $\mathbb N_0$.  Introduce formal vectors $e_0,e_1,\ldots$ with inner product $\langle e_s,e_t\rangle=K(s,t)$, quotient by zero-norm vectors, and complete to obtain $\mathcal H_K$.  The canonical shift is the operator initially defined on the linear span of the $e_t$ by
\begin{equation}
Te_t=e_{t+1}.
\end{equation}
When this map is well defined and bounded, the pair $(T,e_0)$ is the canonical cyclic realization of $K$ because
\begin{equation}\label{eq:canonical-moments}
K(s,t)=\langle T^se_0,T^te_0\rangle_{\mathcal H_K}.
\end{equation}
\end{definition}

\begin{definition}[Subnormality]
A bounded operator $T$ on a Hilbert space $\mathcal H$ is subnormal if there exists a larger Hilbert space $\mathcal G\supseteq\mathcal H$ and a normal operator $N$ on $\mathcal G$ such that $\mathcal H$ is invariant under $N$ and $T=N|_{\mathcal H}$.  Normality means $N^*N=NN^*$.
\end{definition}

\begin{proposition}[Disk measures imply a radius defect]\label{prop:forward}
If $K$ has representation \eqref{eq:Kmu} with $\supp(\mu)\subseteq\overline\D_\rho$, then $K$ is PSD and satisfies
\begin{equation}\label{eq:radius-defect}
\rho^2K(s,t)-K(s+1,t+1)\succeq0.
\end{equation}
Moreover, the canonical shift is subnormal.
\end{proposition}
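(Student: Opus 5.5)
The plan is to derive all three claims from a single Hilbert space, $L^2(\mu)$. The function $p\mapsto p^t$ is bounded on $\overline\D_\rho$, so it lies in $L^2(\mu)$. The representation \eqref{eq:Kmu} then reads
\begin{equation*}
K(s,t)=\langle z^s,z^t\rangle_{L^2(\mu)},
\end{equation*}
where $z$ denotes the coordinate function. Positive semidefiniteness is already Theorem~\ref{thm:psd}; equivalently, it follows because $K$ is a Gram matrix in $L^2(\mu)$.

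For the radius defect \eqref{eq:radius-defect}, fix finitely many coefficients $a_i$ at times $t_i$ and compute the quadratic form as in Theorem~\ref{thm:psd}:
\begin{equation*}
\sum_{i,j}a_i\overline a_j\bigl(\rho^2K(t_i,t_j)-K(t_i+1,t_j+1)\bigr)
=\int_{\overline\D_\rho}(\rho^2-|p|^2)\Bigl|\sum_i a_ip^{t_i}\Bigr|^2d\mu(p).
\end{equation*}
The integrand is nonnegative because $\supp(\mu)\subseteq\overline\D_\rho$, so the form is nonnegative.

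For subnormality, define the isometric map $V:\mathcal H_K\to L^2(\mu)$ on generators by $Ve_t=z^t$. It preserves inner products by construction, so it is well defined on the quotient by zero-norm vectors. It then extends by continuity to an isometry onto the closed subspace
\begin{equation*}
\mathcal P^2(\mu)=\overline{\operatorname{span}}\{z^t\}.
\end{equation*}
Let $N$ be multiplication by $z$ on $L^2(\mu)$. This operator is bounded with $\|N\|\le\rho$, and it is normal because $N^*$ is multiplication by $\overline z$ and the two commute. Clearly $NVe_t=z^{t+1}=Ve_{t+1}$. Hence $\mathcal P^2(\mu)$ is invariant under $N$, and the canonical shift $T$ is unitarily equivalent to $N|_{\mathcal P^2(\mu)}$.

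The main thing to check carefully is that $T$ is well defined and bounded on $\mathcal H_K$, which is part of what ``canonical shift'' presupposes in Definition~\ref{def:canonical}. This follows from the radius defect: it gives $\|T x\|^2\le\rho^2\|x\|^2$ on the span of the $e_t$. Consequently $T$ maps zero-norm vectors to zero-norm vectors and extends to a bounded operator with $\|T\|\le\rho$. The same conclusion can also be read off from the intertwining relation with $N$.

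Finally, identifying $\mathcal H_K$ with its image under $V$ gives a larger space $\mathcal G=L^2(\mu)$ containing $\mathcal H_K$ isometrically as an $N$-invariant subspace, with $T=N|_{\mathcal H_K}$. This is exactly the definition of subnormality.
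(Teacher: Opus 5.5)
Your proof is correct and follows the paper's argument: the same integral identity $\int(\rho^2-|p|^2)\bigl|\sum_i a_ip^{t_i}\bigr|^2d\mu\ge0$ for the defect, and the same restriction of multiplication by $p$ on $L^2(\mu)$ to the closed span of the monomials for subnormality. You additionally make explicit the intertwining isometry $e_t\mapsto z^t$ and check that $T$ is well defined and bounded, both of which the paper's proof leaves implicit.
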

\begin{proof}
PSD follows from Theorem~\ref{thm:psd}.  For the radius defect,
\begin{align*}
&\sum_{s,t}a_s\overline a_t\big(\rho^2K(s,t)-K(s+1,t+1)\big)\\
&\qquad=\int_{|p|\le\rho}(\rho^2-|p|^2)\left|\sum_s a_sp^s\right|^2d\mu(p)\ge0.
\end{align*}
For subnormality, let $N$ be multiplication by $p$ on $L^2(\mu)$.  This is a normal operator with spectrum contained in $\overline\D_\rho$.  The closed subspace generated by $1,p,p^2,\ldots$ is invariant under $N$, and the restriction has moments $K$.  Hence the canonical shift is unitarily equivalent to a subnormal restriction of $N$.
\end{proof}

\begin{proposition}[Radius defect and contractivity]\label{prop:defect-contractivity}
For a Hermitian PSD kernel $K$, the radius defect \eqref{eq:radius-defect} is equivalent to a bounded canonical shift satisfying $\|T\|\le\rho$.
\end{proposition}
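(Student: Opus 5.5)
The idea is to translate the kernel inequality \eqref{eq:radius-defect} into a norm inequality on the dense span of the formal vectors $e_t$ in $\mathcal H_K$. For a finitely supported coefficient sequence $(a_s)$, set $x=\sum_s a_s e_s$. By the definition of the inner product on $\mathcal H_K$,
\begin{equation*}
\|x\|^2=\sum_{s,t}a_s\overline a_tK(s,t),
\qquad
\Big\|\sum_s a_se_{s+1}\Big\|^2=\sum_{s,t}a_s\overline a_tK(s+1,t+1).
\end{equation*}
So \eqref{eq:radius-defect}, read as positivity of the quadratic form, says exactly that
\begin{equation*}
\Big\|\sum_s a_se_{s+1}\Big\|\le\rho\,\|x\|
\end{equation*}
for every finite combination. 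I will prove both directions from this identity.

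For the forward implication (defect $\Rightarrow$ bounded shift with $\|T\|\le\rho$), the first issue is well-definedness, since $\mathcal H_K$ is formed by quotienting out null vectors. If $\sum_s a_se_s$ and $\sum_s b_se_s$ represent the same class, apply the inequality to the difference coefficients $a-b$. This gives $\|\sum_s(a_s-b_s)e_{s+1}\|\le\rho\cdot0=0$, so the prescription $Te_t=e_{t+1}$ respects the quotient. Linearity of $T$ is then immediate. On the dense subspace the inequality gives $\|Tx\|\le\rho\|x\|$, so $T$ extends by continuity (bounded linear extension theorem) to all of $\mathcal H_K$ with $\|T\|\le\rho$.

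For the converse (bounded shift with $\|T\|\le\rho$ $\Rightarrow$ defect), note that $\|T\|\le\rho$ gives $\rho^2\|x\|^2-\|Tx\|^2\ge0$ for all $x$. In particular this holds for $x=\sum_s a_se_s$, and since $Tx=\sum_s a_se_{s+1}$ by definition, expanding both norms through the kernel identities above yields
\begin{equation*}
\sum_{s,t}a_s\overline a_t\big(\rho^2K(s,t)-K(s+1,t+1)\big)\ge0,
\end{equation*}
which is \eqref{eq:radius-defect}. The Hermitian symmetry of $K$ ensures the form is real, so the statement is meaningful as positive semidefiniteness.

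The only genuine subtlety, and the step I expect to need the most care, is the well-definedness on the quotient in the forward direction. This is precisely where the hypothesis is used beyond a mere norm estimate: without the defect inequality, the assignment $e_t\mapsto e_{t+1}$ could map a null vector to a non-null one. Everything else is bookkeeping. I will remark that the whole argument is finite-dimensional at each step and uses no spectral theory, which is why it yields only contractivity (after scaling by $\rho$) and not subnormality.
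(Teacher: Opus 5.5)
Your proof is correct and follows the paper's argument. Both rest on the identity $\rho^2\|x\|^2-\|Tx\|^2=\sum_{s,t}a_s\overline a_t\big(\rho^2K(s,t)-K(s+1,t+1)\big)$ on the span of the $e_t$, followed by extension by continuity; your explicit check that the defect sends null vectors to null vectors, so that $Te_t=e_{t+1}$ is well defined on the quotient, is a detail the paper leaves implicit.
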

\begin{proof}
For $x=\sum_s a_se_s$,
\begin{equation}
\rho^2\|x\|^2-\|Tx\|^2=\sum_{s,t}a_s\overline a_t\big(\rho^2K(s,t)-K(s+1,t+1)\big).
\end{equation}
Thus the defect is PSD precisely when $\|Tx\|\le\rho\|x\|$ on the dense span of the canonical vectors, and hence on $\mathcal H_K$ by continuity.
\end{proof}

\begin{theorem}[Disk--Bochner]\label{thm:db}
Let $K$ be a Hermitian PSD kernel on $\mathbb N_0$.  Then $K$ admits
\begin{equation}\label{eq:db-main}
K(s,t)=\int_{|p|\le\rho}p^s\overline p^{\,t}\,d\mu(p)
\end{equation}
for a finite positive Borel measure $\mu$ on $\overline\D_\rho$ if and only if its canonical shift is subnormal and has a normal extension $N$ with $\sigma(N)\subseteq\overline\D_\rho$.  In either case the radius defect \eqref{eq:radius-defect} holds.
\end{theorem}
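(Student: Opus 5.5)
The forward direction is already in hand: if $K$ has the representation \eqref{eq:db-main}, Proposition~\ref{prop:forward} gives PSD, the radius defect \eqref{eq:radius-defect}, and subnormality of the canonical shift. That proposition realizes the shift as the restriction of $N=M_p$ (multiplication by $p$) on $L^2(\mu)$ to the closed span of $1,p,p^2,\ldots$. The unitary $e_t\mapsto p^t$ is well defined and isometric because $\langle p^s,p^t\rangle_{L^2(\mu)}=K(s,t)$, so it intertwines the canonical shift $T$ with $N$ restricted to that closed span. Since $\supp\mu\subseteq\overline\D_\rho$ and $\sigma(M_p)$ is the essential range of $p$, which lies in $\supp\mu$, we get $\sigma(N)\subseteq\overline\D_\rho$. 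The only remaining point is to transport $N$ through this unitary, enlarging $\mathcal H_K$ to a space unitarily equivalent to $L^2(\mu)$, so that the normal extension genuinely contains $\mathcal H_K$ as in the definition.

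For the converse, assume $T$ is well defined and bounded on $\mathcal H_K$, and that $T=N|_{\mathcal H_K}$ with $N$ normal on $\mathcal G\supseteq\mathcal H_K$ and $\sigma(N)\subseteq\overline\D_\rho$. I will apply the spectral theorem to write $N=\int_{\sigma(N)}z\,dE(z)$ with a projection-valued measure $E$ supported in $\overline\D_\rho$. Then I set $\mu(\cdot)=\langle E(\cdot)e_0,e_0\rangle_{\mathcal G}$, a finite positive Borel measure on $\overline\D_\rho$ with total mass $K(0,0)$. Invariance of $\mathcal H_K$ gives $T^te_0=N^te_0$ for all $t$. Hence
\[
K(s,t)=\langle T^se_0,T^te_0\rangle=\langle N^se_0,N^te_0\rangle=\langle N^{*t}N^se_0,e_0\rangle ,
\]
using \eqref{eq:canonical-moments} in the first step. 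By normality and the functional calculus, $N^{*t}N^s=\int \overline z^{\,t}z^s\,dE(z)$, and this yields \eqref{eq:db-main}. The final claim, that the radius defect holds in either case, then follows from Proposition~\ref{prop:forward} applied to the measure just constructed. Equivalently, $\|T\|\le\|N\|=r(N)\le\rho$ together with Proposition~\ref{prop:defect-contractivity} gives the same conclusion.

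The main obstacle is being careful about well-definedness and boundedness of $T$ in the converse, together with the support of $\mu$. The hypothesis presupposes a bounded canonical shift, so I will state explicitly that subnormality is only meaningful once $T$ is bounded. The step $\|N\|=r(N)$ uses normality essentially and is what transfers the spectral constraint into both the support of $\mu$ and the defect inequality. I will also note that in the forward direction boundedness of $T$ is automatic, since the defect gives $\|T\|\le\rho$ by Proposition~\ref{prop:defect-contractivity}.
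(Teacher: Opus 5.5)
Your proposal is correct and follows essentially the same route as the paper. The forward direction goes through Proposition~\ref{prop:forward} (multiplication by $p$ on $L^2(\mu)$, restricted to the closed span of the monomials), and the converse uses the spectral theorem with the scalar measure $\mu(\cdot)=\langle E(\cdot)e_0,e_0\rangle$, just as the paper does. Your extra details are all correct and only make explicit what the paper leaves implicit: transporting $M_p$ through the unitary $e_t\mapsto p^t$, identifying $\sigma(M_p)=\supp\mu$, and deriving the radius defect from $\|T\|\le\|N\|=r(N)\le\rho$.
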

\begin{proof}
The forward implication follows from Proposition~\ref{prop:forward}.  Conversely, suppose the canonical shift $T$ has a normal extension $N$ with spectral measure $E(\cdot)$ and spectrum in $\overline\D_\rho$.  Let $b=e_0$.  Since $T^sb=N^sb$ for all $s\ge0$,
\begin{equation}
K(s,t)=\langle N^sb,N^tb\rangle.
\end{equation}
By the spectral theorem,
\begin{equation}
N=\int_{\sigma(N)}p\,dE(p).
\end{equation}
Define the scalar measure $\mu(\Omega)=\langle E(\Omega)b,b\rangle$.  Then $\mu$ is finite, positive, supported in $\overline\D_\rho$, and
\begin{equation}
K(s,t)=\int_{\sigma(N)}p^s\overline p^{\,t}\,d\mu(p).
\end{equation}
\end{proof}

\begin{proposition}[Uniqueness]\label{prop:unique}
If a finite positive measure satisfying \eqref{eq:db-main} exists, it is unique.
\end{proposition}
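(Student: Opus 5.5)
The plan is to show that the full family of mixed moments $K(s,t)=\int p^s\overline p^{\,t}\,d\mu(p)$, $s,t\in\mathbb N_0$, determines $\mu$ on the compact set $\overline\D_\rho$. The key point is that \eqref{eq:db-main} supplies all \emph{mixed} moments $p^s\overline p^{\,t}$, not only the analytic ones $p^s$. This is the two-dimensional (complex) moment problem on a compact set, which is determinate.

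Suppose $\mu_1,\mu_2$ are finite positive Borel measures on $\overline\D_\rho$ that both satisfy \eqref{eq:db-main}. If they were given on different disks, I would regard both as measures on the larger closed disk, which is still compact. By linearity, $\int q\,d\mu_1=\int q\,d\mu_2$ for every polynomial $q(p,\overline p)=\sum_{s,t}c_{st}p^s\overline p^{\,t}$ with complex coefficients.

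Let $\mathcal P$ denote this set of polynomials, viewed as functions restricted to $\overline\D_\rho$. Then $\mathcal P$ is a subalgebra of $C(\overline\D_\rho)$ with the following properties:
\begin{itemize}
\item it contains the constants (take $s=t=0$);
\item it separates points, since the coordinate function $p$ already does;
\item it is closed under complex conjugation, because $\overline{p^s\overline p^{\,t}}=p^t\overline p^{\,s}$.
\end{itemize}
The complex Stone--Weierstrass theorem therefore makes $\mathcal P$ uniformly dense in $C(\overline\D_\rho)$. Both measures are finite, so $f\mapsto\int f\,d\mu_i$ is continuous in the sup norm. Hence $\int f\,d\mu_1=\int f\,d\mu_2$ for every continuous $f$ on $\overline\D_\rho$.

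Finally, the uniqueness part of the Riesz representation theorem applies, since finite Borel measures on a compact metric space are regular. It gives $\mu_1=\mu_2$.

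The only step needing care is the density argument, and specifically conjugation closure. Analytic moments $p^s$ alone would not determine $\mu$: for example, normalized arc length on $|p|=r$ and the point mass at $0$ have the same analytic moments. The proof must therefore use the conjugate factor $\overline p^{\,t}$, which \eqref{eq:db-main} supplies.

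Compactness of the support is also essential, since moment determinacy can fail for unbounded supports. It is guaranteed here by the hypothesis $\supp\mu\subseteq\overline\D_\rho$.
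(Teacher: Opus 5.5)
Your proposal is correct and follows essentially the same route as the paper: agreement on all mixed moments $p^s\overline p^{\,t}$, Stone--Weierstrass density of polynomials in $p,\overline p$ in $C(\overline\D_\rho)$, and then uniqueness in the Riesz representation theorem. Your extra remarks are accurate but go beyond what the proof needs: analytic moments alone would not suffice, as your arc-length versus point-mass example shows, and compactness of the support is essential.
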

\begin{proof}
If two measures agree on all mixed moments $p^s\overline p^{\,t}$, their difference integrates to zero against every polynomial in $p$ and $\overline p$.  This algebra contains constants, separates points on the compact disk, and is closed under conjugation.  By Stone--Weierstrass it is dense in $C(\overline\D_\rho)$, so the measures agree on all continuous functions and are equal by the Riesz representation theorem.
\end{proof}

\subsection{Necessity of the subnormality hypothesis}
The one-step defect condition is equivalent to contractivity of the canonical shift.  A general contraction, however, need not be subnormal.  The following elementary example shows why subnormality is the right hypothesis for a scalar disk moment representation.  Let
\begin{equation}
A=\begin{bmatrix}0&1\\0&0\end{bmatrix},
\qquad b=\begin{bmatrix}0\\1\end{bmatrix},
\end{equation}
and define $K(s,t)=\langle A^sb,A^tb\rangle$.  Then $K$ is a Gram kernel and is PSD.  Since $A$ is a contraction, the unit-radius defect $K(s,t)-K(s+1,t+1)$ is also PSD.  Its diagonal moments are
\begin{equation}
K(0,0)=1,
\qquad K(1,1)=1,
\qquad K(k,k)=0 \;\; (k\ge2).
\end{equation}
If a positive disk measure existed, these equations would imply
\begin{equation}
\int1\,d\mu=1,
\qquad \int|p|^2d\mu=1,
\qquad \int|p|^4d\mu=0.
\end{equation}
The last equality forces $p=0$ almost everywhere, contradicting the second.  Thus shift-contractivity alone does not characterize disk moment kernels.

\section{Radius Normalization and the Nevanlinna--Pick Connection}\label{sec:NP}
Nevanlinna--Pick (NP) theory is a statement about Schur-class analytic functions on the unit disk.  To avoid notational ambiguity, this section uses the delay variable
\[
        \zeta=z^{-1},
\]
so that a stable discrete-time transfer function is written as
\begin{equation}\label{eq:delay-transfer}
        H(\zeta)=D+\sum_m\frac{c_m}{1-p_m\zeta},
        \qquad |\zeta|<1 .
\end{equation}
The usual frequency response is recovered by setting \(\zeta=e^{-\mathrm i\omega}\).  Thus the pole parameter \(p_m\) is the discrete-time system pole, while \(\zeta\) is the analytic variable used by Schur and NP theory.

There are two related but distinct normalizations.  The first is the pole-radius normalization of the Disk--Bochner kernel.  If
\[
        K(s,t)=\int_{|p|\le\rho}p^s\bar p^t\,d\mu(p),
        \qquad 0<\rho\le1,
\]
then
\begin{equation}\label{eq:normalized-kernel}
        \widetilde K(s,t)=\frac{K(s,t)}{\gamma^2\rho^{s+t}}
\end{equation}
is, up to the scalar factor \(\gamma^{-2}\), the pole-moment kernel associated with the normalized pole coordinate \(q=p/\rho\).  The corresponding defect identity is
\begin{equation}\label{eq:normalized-defect}
\widetilde K(s,t)-\widetilde K(s+1,t+1)=
\frac{\rho^2K(s,t)-K(s+1,t+1)}{\gamma^2\rho^{s+t+2}} .
\end{equation}
Hence support in \(|p|\le\rho\) implies the radius defect
\begin{equation}\label{eq:radius-defect-np}
        \rho^2K(s,t)-K(s+1,t+1)\succeq0 .
\end{equation}
This is a structural pole-geometry condition.

The second normalization is the gain normalization used for NP interpolation.  If
\[
        \|H\|_{H_\infty}:=\sup_{|\zeta|<1}\|H(\zeta)\|\le\gamma,
\]
then
\begin{equation}\label{eq:Fnormalized}
        F(\zeta)=\gamma^{-1}H(\zeta)
\end{equation}
belongs to the Schur class.  Given interpolation data
\[
        F(\zeta_i)=W_i,\qquad |\zeta_i|<1,
\]
the matrix-valued NP theorem states that a feasible Schur function exists if and only if
\begin{equation}\label{eq:np-matrix}
        \left[\frac{I-W_iW_j^*}{1-\zeta_i\bar\zeta_j}\right]_{i,j}\succeq0 .
\end{equation}
When \eqref{eq:np-matrix} is feasible, all solutions are parameterized by a linear fractional transformation
\begin{equation}\label{eq:np-lft}
        F=\mathcal T_\Theta[Q],\qquad Q\in\mathcal S,
\end{equation}
where \(\Theta\) is determined by the interpolation data and \(Q\) is a free Schur parameter.  In MIMO problems this is the standard Redheffer/LFT parametrization.  Thus NP theory gives an exact set-membership description of all gain-normalized analytic transfer functions satisfying the interpolation data \citep{sarason:1994sub,Rotstein1996,Parrilo1998}.

The pole-radius normalization \(q=p/\rho\) and the gain normalization \(F=H/\gamma\) should not be conflated.  If one rewrites \eqref{eq:delay-transfer} in normalized pole coordinates \(p=\rho q\), then
\[
        H(\zeta)=D+\sum_m\frac{c_m}{1-\rho q_m\zeta}.
\]
Equivalently, the transfer function with poles \(q_m\) is obtained from the scaled argument \(H(\eta/\rho)\).  A Schur certificate for \(\gamma^{-1}H(\eta/\rho)\) is therefore a gain certificate for a scaled-domain function, not automatically the same as the usual unit-circle \(H_\infty\) bound for \(H(\zeta)\).  In the RAF formulation we keep these roles separate: pole-radius priors are imposed by sampling \(|p_m|\le\rho\) and by the radius defect, while gain constraints are imposed directly on the transfer function in the delay variable \(\zeta\) through Pick, KYP, or frequency-grid constraints.

The RAF dictionary is then a finite-dimensional coordinate system for searching inside a Schur set-membership problem.  Define the rational RAF span in the delay variable by
\begin{equation}\label{eq:VM}
\mathcal V_M=
\left\{\gamma^{-1}\left(D+\sum_{m=1}^M\frac{b_m}{1-p_m\zeta}\right):\ |p_m|\le\rho\right\}.
\end{equation}
For a fixed NP data set, let
\begin{equation}\label{eq:FNP}
\mathcal F_{\NP}=\{F\in\mathcal S:\ F(\zeta_i)=W_i,
\ i=1,\ldots,N\}
\end{equation}
and
\begin{equation}\label{eq:FM}
\mathcal F_M=\mathcal F_{\NP}\cap\mathcal V_M,
\end{equation}
possibly with additional finite-dimensional convex side constraints.  Then \(\mathcal F_M\) is a certified inner approximation of the NP feasible set: every RAF model satisfying the normalized Pick or gain constraints is a member of \(\mathcal F_{\NP}\).

\begin{proposition}[RAF inner approximation]\label{prop:raf-np-inner}
Assume nested sampled pole sets dense in a compact subset of the open disk \(|p|<\rho\). Then:
\begin{enumerate}
\item \(\mathcal F_M\subseteq\mathcal F_{\NP}\) for every \(M\).
\item On any finite impulse-response horizon, \(T\) distinct sampled poles span all length-\(T\) impulse-response vectors.
\item In \(H_2\), equivalently in \(\ell_2\) impulse-response norm, the closed span of the atoms \(a(p)=(1,p,p^2,\ldots)\) over any set with an accumulation point in the disk is the whole space.
\item Simple-pole rational members of \(\mathcal F_{\NP}\) whose poles are included in the dictionary are represented exactly.  More general rational or Schur-feasible functions can be approximated in finite-horizon, compact-open, and \(H_2\)-type topologies, provided the approximants are also certified by the imposed Schur/Pick or conic constraints.  No equality with the full \(H_\infty\) Schur class, and no \(H_\infty\)-norm density for a fixed finite dictionary, is claimed.
\end{enumerate}
\end{proposition}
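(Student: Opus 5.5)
I will prove the four items separately, since each rests on a different elementary fact.

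\textbf{Item 1.} This is immediate from the definition $\mathcal F_M=\mathcal F_{\NP}\cap\mathcal V_M$ in \eqref{eq:FM}. Any additional convex side constraints only shrink the set. I will also note that membership in $\mathcal F_{\NP}$ is certified in the imposed sense: feasibility in \eqref{eq:np-matrix}, or a KYP/Pick certificate for the finite-dimensional rational model, gives the Schur property. So an optimizer over $\mathcal F_M$ never leaves the NP feasible set.

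\textbf{Item 2.} I will use the finite-horizon Vandermonde matrix $\Phi=[a_T(p_1)\;\cdots\;a_T(p_T)]$ from \eqref{eq:Phi}. Its determinant is $\prod_{i<j}(p_j-p_i)$, which is nonzero for distinct poles. Hence the columns form a basis of $\C^T$, and every length-$T$ impulse response is a combination of $T$ sampled atoms. Nestedness and density are only needed to guarantee that $T$ distinct sampled poles eventually exist.

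\textbf{Item 3.} I will argue by orthogonal complement. Suppose $g\in\ell_2$ is orthogonal to $a(p)$ for all $p$ in a set $S$ with an accumulation point in $\D$. Consider
\[
G(\zeta)=\sum_t \overline{g(t)}\zeta^t .
\]
This function is analytic on $\D$ because $g\in\ell_2$, and orthogonality gives $G(p)=0$ for all $p\in S$. By the identity theorem, $G\equiv 0$, so $g=0$. Therefore the closed span is all of $\ell_2\cong H_2$. This is the same identity-theorem step used in Theorem~\ref{thm:psd}. The only care needed is checking that $\langle a(p),g\rangle$ converges and equals $G(p)$, which holds by Cauchy--Schwarz since $|p|<1$. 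The nested sampled sets are dense in a compact set with nonempty interior, or at least infinitely many points in a compact subset of $\D$, so their union has an accumulation point.

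\textbf{Item 4.} Exact representation follows from partial fractions. If $F\in\mathcal F_{\NP}$ is rational with simple poles $p_m$ that all lie in the dictionary, then $\gamma F=D+\sum b_m/(1-p_m\zeta)$ lies in $\mathcal V_M$ for large $M$, so $F\in\mathcal F_M$.

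For the approximation statements, I will combine the finite-horizon result of Item 2 and the $H_2$ density of Item 3. The compact-open topology follows from $H_2$ convergence, because point evaluation at $|\zeta|\le r<1$ is bounded by $\|h\|_2(1-r^2)^{-1/2}$.

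The main obstacle is certification. An $H_2$-close approximant of a Schur function need not be Schur, and it need not interpolate exactly. The statement sidesteps this through its proviso, so the plan is to state clearly that the approximation claim is conditional on the approximants satisfying the imposed constraints. I will also indicate one way to meet the proviso: shrink by a factor $(1-\epsilon)$ to obtain strict Pick positivity and margin, then correct the interpolation values with a small finite-dimensional linear adjustment in the dictionary span. The explicit disclaimers about $H_\infty$ are then simply noted as not claimed.
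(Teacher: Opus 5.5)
Items 1--3 of your proposal are the paper's proof. The inclusion is read off from \eqref{eq:FM}, item 2 is the nonsingular Vandermonde matrix, and item 3 is the same orthogonal-complement argument with $\sum_t\overline{g(t)}p^t$ and the identity theorem.

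For the approximation part of item 4 you take a different route. The paper goes through the LFT parametrization \eqref{eq:np-lft}. Rational Schur parameters $Q$ give rational functions that still interpolate and are still Schur, and continuity of the LFT lets these approximate a general member of $\mathcal F_{\NP}$. Only the final step leaves the feasible set: passing from a rational function to a simple-pole RAF sum, with nearby simple poles or derivative atoms handling repeated poles. You instead apply the $H_2$ density of item 3 directly to $F\in H_\infty\subset H_2$, and get compact-open convergence from your point-evaluation bound. Your version is shorter and makes the topological claims explicit. The paper's version keeps the intermediate rational approximants feasible, so certification is lost only at the dictionary step; you put the whole certification burden on the proviso. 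Since the statement makes the approximation claim conditional on certification, both routes suffice.

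There are three smaller points.

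\textbf{Item 1 remark.} Positivity of the data Pick matrix \eqref{eq:np-matrix} certifies that \emph{some} Schur interpolant exists. It does not certify that a particular model in $\mathcal V_M$ is Schur; that needs a bounded-real/KYP certificate. Nothing in item 1 depends on this, since the inclusion is definitional.

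\textbf{Partial fractions.} Your partial-fraction step, like the paper's, tacitly assumes $F$ is bounded as $\zeta\to\infty$. The Schur function $F(\zeta)=\zeta$ (a unit delay, pole at $p=0$) is not in $\mathcal V_M$ even if $0$ is sampled. Every element of $\mathcal V_M$ stays bounded as $\zeta\to\infty$, because the $p=0$ atom is just the constant $1$.

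\textbf{Optional certification sketch.} Your way of meeting the proviso does not work as written, for two reasons.
\begin{enumerate}
\item Multiplying by $1-\epsilon$ moves the interpolation values by $\epsilon W_i$. The linear correction is then of order $C\epsilon$, with $C$ depending on the nodes and correction atoms, and this need not be below the margin $\epsilon$.
\item An $H_2$- or compact-open approximant can exceed $1$ near the boundary regardless of the margin.
\end{enumerate}
When the Pick matrix is strictly positive definite, the following version works.
\begin{enumerate}
\item Take a convex combination of $F$ with a strictly contractive interpolant. This preserves interpolation and creates a fixed margin.
\item Dilate $\zeta\mapsto r\zeta$.
\item Approximate \emph{uniformly} on the closed disk. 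The span of $1/(1-p\zeta)$ over a set with an accumulation point is dense in the disk algebra, by the same identity-theorem duality as in item 3, now applied to annihilating measures.
\item Restore interpolation with $N$ fixed atoms.
\end{enumerate}
The correction then tends to zero as $r\to1$ and as the uniform approximation error vanishes, while the margin stays fixed.
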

\begin{proof}
The inclusion is immediate from \eqref{eq:FM}.  For the finite-horizon statement, \(T\) distinct poles give a full-rank Vandermonde matrix on samples \(0,\ldots,T-1\), so their span is all of \(\mathbb C^T\).  For the \(H_2\) statement, suppose \(h\in\ell_2\) is orthogonal to \(a(p)\) for all \(p\) in a set with an accumulation point inside the disk.  Then the analytic function
\[
        g(p)=\sum_{t\ge0}\overline{h(t)}p^t
\]
vanishes on a set with an accumulation point; by the identity theorem \(g\equiv0\), hence all coefficients \(h(t)\) vanish.  Therefore the atom span is dense.  Finally, the NP/LFT formula \eqref{eq:np-lft} parametrizes the feasible family by a free Schur parameter.  Rational Schur parameters yield rational feasible transfer functions, and the LFT is continuous on compact subsets where it is nonsingular.  Dense RAF pole sampling can represent simple-pole rational functions exactly when their poles are present; repeated-pole terms require derivative atoms or approximation by nearby simple poles.  Feasibility with respect to the NP/Pick or gain constraints remains a separate certification step.
\end{proof}

The Disk--Bochner pole-moment kernel and the Pick kernel therefore have different roles.  Monte Carlo pole sampling gives convergence of empirical pole-moment kernels to the chosen \(K_\mu\).  Separately, if RAF transfer functions \(F_M\) converge locally uniformly to a Schur function \(F\), then their Pick kernels
\[
        K_{F_M}(\zeta,\xi)=\frac{I-F_M(\zeta)F_M(\xi)^*}{1-\zeta\bar\xi}
\]
converge locally uniformly to \(K_F\) on compact subsets.  The first convergence is a feature-kernel statement; the second is a transfer-function contractivity statement.

\section{Sparse Recovery and Disk-Vandermonde Conditioning}\label{sec:sparse-recovery}
The sparse regression layer of RAF uses standard Lasso or group-Lasso mechanisms, but the statistical guarantees must be stated in terms of the actual realized design matrix.  Let $Z_T$ denote the regression matrix used in the finite data problem: $Z_T=\Phi$ for direct impulse-response fitting and $(Z_T)_{t,m}=(\phi_m*u)(t)$ for input-output fitting.  The familiar sample-size rate is obtained when the empirical Gram matrix $G_T=Z_T^*Z_T/T$ satisfies an $s$-sparse restricted-eigenvalue condition with constant $\kappa_{\RE}>0$ and the columns are scaled so that their empirical norms are $O(1)$.  Under sub-Gaussian noise with variance proxy $\sigma^2$ and a true coefficient vector $c^\star$ supported on a set $S$ of size $s$, standard arguments yield
\begin{equation}\label{eq:lassa-rate}
\|\widehat c-c^\star\|_2
\le C\frac{\sigma}{\kappa_{\RE}}\sqrt{\frac{s\log M}{T}},
\end{equation}
up to constants depending on normalization and group structure \citep{Bickel:2008}.  For raw impulse-response Vandermonde columns with $|p_m|<1$, the column norms may saturate as $T$ grows; in that case one should use the normalized finite-horizon Gram matrix, and the dependence on $T$ is captured by the resulting conditioning constant rather than by the displayed rate alone.  Exact support recovery is stronger and requires additional beta-min and incoherence or irrepresentability assumptions.  Therefore the role of RAF sampling is not to guarantee recovery for every disk dictionary, but to construct a dictionary whose realized design matrix is sufficiently well conditioned for the data and priors at hand.

For the pure disk-Vandermonde matrix $\Phi_{t,m}=p_m^t$, conditioning depends on both angular and radial separation.  The finite-horizon Gram matrix is
\begin{equation}\label{eq:disk-gram}
G_{ij}=\sum_{t=0}^{T-1}\overline p_i^{\,t}p_j^t
=\frac{1-(\overline p_i p_j)^T}{1-\overline p_i p_j}.
\end{equation}
After column normalization, the pairwise coherence is
\begin{equation}\label{eq:muT}
\mu_T(p_i,p_j)=
\frac{\left|\sum_{t=0}^{T-1}(\overline p_i p_j)^t\right|}
{\left(\sum_{t=0}^{T-1}|p_i|^{2t}\right)^{1/2}
 \left(\sum_{t=0}^{T-1}|p_j|^{2t}\right)^{1/2}}.
\end{equation}
A simple sufficient condition for a normalized sub-Gram matrix is
\begin{equation}\label{eq:coh-suff}
\max_{i\ne j\in S}\mu_T(p_i,p_j)\le\mu_s
\quad\Rightarrow\quad
\lambda_{\min}(G_S)\ge1-(s-1)\mu_s,
\end{equation}
which follows from Gershgorin's theorem.  This is conservative, but it highlights that angular separation alone is insufficient for disk-supported atoms.  Two poles with close radii and close angles can be nearly collinear over a finite horizon, and poles near the unit circle can require longer horizons to separate.

The infinite-horizon limit exposes the natural geometry.  For $|p_i|,|p_j|<1$,
\begin{equation}\label{eq:muinf}
\mu_\infty(p_i,p_j)=
\frac{\sqrt{(1-|p_i|^2)(1-|p_j|^2)}}{|1-\overline p_i p_j|},
\end{equation}
and
\begin{equation}\label{eq:pseudohyp}
1-\mu_\infty(p_i,p_j)^2=
\left|\frac{p_i-p_j}{1-\overline p_i p_j}\right|^2.
\end{equation}
The right-hand side is the squared pseudo-hyperbolic distance in the disk, familiar from Hardy-space theory.  For poles with the same angle, $p_i=r_ie^{i\theta}$ and $p_j=r_je^{i\theta}$,
\begin{equation}
d_{\D}(p_i,p_j)=\frac{|r_i-r_j|}{|1-r_ir_j|}.
\end{equation}
Radial separation is therefore part of the conditioning story.  In this paper, a ``well-spread'' pole distribution should be understood as one that produces small finite-horizon coherence, or more generally a favorable RE constant, for the candidate supports encountered by the estimator.  Developing sharp RIP/RE theory for disk-Vandermonde dictionaries is an interesting problem but is outside the scope of this work.

\section{Engineering Priors as Sampling Laws and Convex Constraints}\label{sec:priors}
The practical advantage of the RAF parameterization is that physical knowledge can be enforced either before optimization, by shaping the pole sampling law, or during optimization, by convex constraints on residues and feedthrough terms.  We describe common priors for the SISO case; MIMO extensions use the same parallel realization with block residues and channel-wise group structure.

\subsection{Order and sparsity}
Parsimonious models are encouraged by $\ell_1$ or group-$\ell_1$ penalties over conjugate pole pairs,
\begin{equation}
\sum_{g\in\mathcal G}\|c_g\|_2\le\tau,
\end{equation}
or by adding $\lambda_1\sum_g\|c_g\|_2$ to the objective.  Reweighted variants can be used after an initial solve to reduce bias and sharpen the active set.  Sparsity is primarily a model-selection device: it identifies a small set of active modes from a much larger sampled dictionary.  These active modes should be interpreted as candidate pole regions rather than certified continuous pole estimates; they can guide local resampling or nonlinear refinement when the data, separation, signal-to-noise ratio, and design conditioning are favorable.

\subsection{Stability, settling, and tail budgets}
A stability margin is imposed by sampling only poles with $|p_m|\le\rho<1$.  This is a hard prior; no solver constraint is needed.  For pointwise impulse-response settling after time $T_s$,
\begin{equation}\label{eq:settle}
\sup_{t\ge T_s}|h(t)|
\le\sum_{m=1}^M|c_m||p_m|^{T_s}
\le\epsilon_h.
\end{equation}
This mode-resolved envelope is tighter than the common-radius bound $\rho^{T_s}\sum_m|c_m|$ when the sampled poles have different radii.  For truncation error or bounded-input robustness, the relevant quantity is the $\ell_1$ tail,
\begin{equation}\label{eq:l1-tail}
\sum_{t=T_s}^{\infty}|h(t)|
\le\sum_{m=1}^M\frac{|c_m||p_m|^{T_s}}{1-|p_m|}.
\end{equation}
The global BIBO budget is the case $T_s=0$:
\begin{equation}\label{eq:bibo-budget}
\sum_{m=1}^M\frac{|c_m|}{1-|p_m|}\le h_{\max}.
\end{equation}
Similarly, the denominator $|1-p_m|$ belongs to step-response and DC-gain quantities.  If $s(t)=\sum_{\tau=0}^t h(\tau)$, then
\begin{equation}\label{eq:step-tail}
|s_\infty-s(t)|\le\sum_m\frac{|c_m||p_m|^{t+1}}{|1-p_m|}.
\end{equation}
For complex residues, constraints involving $|c_m|$ are implemented with epigraph variables $a_m\ge |c_m|$, equivalently $\sqrt{(\Re c_m)^2+(\Im c_m)^2}\le a_m$, so the resulting inequalities are SOCP-representable.

\subsection{Modal localization}
Known time constants can be encoded by sampling radii near $r=e^{-\alpha_i}$ or by imposing annular regions $r\in[r_i^{\min},r_i^{\max}]$.  Resonant bands are encoded by restricting angles $\theta_m\in\Theta=\cup_b[\omega_b^{\min},\omega_b^{\max}]$.  Soft preferences can be added through weighted penalties or through sampling distributions concentrated near expected modes.  Hard exclusions, such as rejecting unstable or nonphysical modes, are enforced by simply not sampling those regions.

\subsection{DC gain and static constraints}
For
\begin{equation}
G(z)=D+\sum_m\frac{c_m}{1-p_mz^{-1}},
\end{equation}
the DC gain is
\begin{equation}\label{eq:dcgain}
G(1)=D+\sum_m\frac{c_m}{1-p_m}.
\end{equation}
A bound $|G(1)|\le G_{\max}$ is an SOCP constraint after splitting real and imaginary parts; a known DC gain is imposed by equality.  Similar affine constraints can encode known Markov parameters, zero steady-state error under a specified input class, or sign information at selected frequencies.

\subsection{Passivity}
A finite RAF model has the parallel realization
\begin{align}\label{eq:parallel-realization}
A&=\diag(p_1,\ldots,p_M),&
B&=\mathbf 1,\nonumber\\
C&=[c_1\ \cdots\ c_M],&
D&\in\mathbb R .
\end{align}
Discrete-time positive realness can be imposed by the positive-real KYP LMI \citep{Boyd:1994}.  In the RAF model, $A$ and $B$ are fixed by the sampled poles, while $C$ and $D$ are affine in the decision variables.  The KYP condition is therefore convex in the unknowns together with the storage matrix $P\succ0$.

\subsection{Monotonicity, no overshoot, and relative degree}
For step monotonicity, restrict to real poles $p_m\in(0,1)$ and impose $c_m\ge0$.  Then $h(t)\ge0$ and the step response $s(n)=\sum_{t=0}^nh(t)$ is nondecreasing.  Discrete concavity or no-overshoot refinements can be written as linear inequalities because
\begin{equation}
\Delta(p^t)=p^t-p^{t-1}=-(1-p)p^{t-1}.
\end{equation}
Relative degree $r_d$ is enforced by
\begin{equation}\label{eq:reldeg}
\sum_{m=1}^M c_mp_m^k=0,
\qquad k=0,\ldots,r_d-1.
\end{equation}
Equivalently, shifted features $\phi_m^{(r_d)}(t)=p_m^{t-r_d}\mathbf 1\{t\ge r_d\}$ can be used so that the first $r_d$ taps vanish by construction.

\subsection{Time-domain error bounds}
Given measured output $y_{\rm meas}$ and model output $y_{\rm model}=Xc+Du$, elementwise tolerance bounds are
\begin{equation}\label{eq:timebox}
-\varepsilon(t)\le y_{\rm meas}(t)-(Xc+Du)_t\le\varepsilon(t),
\qquad t=0,\ldots,T-1.
\end{equation}
These are linear inequalities.  Windowed RMS bounds over segments can be imposed as
\begin{equation}\label{eq:window-bound}
\|W_\ell(y_{\rm meas}-Xc-Du)\|_2\le\eta_\ell,
\qquad \ell=1,\ldots,K_{\rm win},
\end{equation}
which are SOCP constraints.  Such constraints are useful in set-membership identification, where the model must fit the data within known time-domain uncertainty envelopes rather than merely minimize average squared error.

\subsection{Frequency response masks}\label{sec:freqmask}
At a grid $\omega_k\in[0,\pi]$,
\begin{equation}\label{eq:Gk}
G_k=D+\sum_{m=1}^M\frac{c_m}{1-p_me^{-i\omega_k}}
\end{equation}
is affine in $(c,D)$.  For MIMO systems, the exact constraint $\sigma_{\max}(G_k)\le\gamma_k$ is equivalent to
\begin{equation}\label{eq:gain-lmi}
\begin{bmatrix}
\gamma_kI&G_k\\
G_k^*&\gamma_kI
\end{bmatrix}\succeq0.
\end{equation}
Directional SOCP approximations use fixed input directions $v_\ell$:
\begin{equation}\label{eq:dir-bound}
\|G_kv_\ell\|_2\le\gamma_k,
\qquad \ell=1,\ldots,L_{\rm dir}.
\end{equation}
Input/output weights are included by replacing $G_k$ with $W_{\rm out}(\omega_k)G_kW_{\rm in}(\omega_k)$.  These constraints provide grid-based frequency shaping within the same convex RAF estimation problem.  A true full-band \(H_\infty\) certificate requires an additional bounded-real/KYP or equivalent robust-control condition rather than only a finite frequency grid.

\subsection{Master problem}\label{sec:master}
Combining these pieces gives a representative convex program.  Let
$y^\mathcal A_{t,m}=(\phi_m*u)(t)$ and
$y_{\rm model}=y^\mathcal Ac+Du$.  The objective is
\begin{equation}\label{eq:master-obj}
\min_{c,D}\; \frac12\|y_{\rm meas}-y^\mathcal Ac-Du\|_2^2
+\lambda_2\|c\|_2^2+
\lambda_1\sum_{g\in\mathcal G}\|c_g\|_2 .
\end{equation}
Representative optional constraints include
{
\begin{align}
-\varepsilon &\le y_{\rm meas}-y^\mathcal Ac-Du\le\varepsilon,\label{eq:master-time}\\
\|W_k(y_{\rm meas}-y^\mathcal Ac-Du)\|_2&\le\eta_k,
\quad k=1,\ldots,K_{\rm win},\label{eq:master-window}\\
G_k&=D+\sum_{m=1}^M\frac{c_m}{1-p_me^{-i\omega_k}},
\quad k=1,\ldots,K_\omega,\label{eq:master-gk}\\
\|G_kv_\ell\|_2&\le\gamma_k,
\quad \ell=1,\ldots,L_{\rm dir},\label{eq:master-dir}\\
\sum_{m=1}^M a_m|p_m|^{T_s}&\le\epsilon_h,\label{eq:master-settle}\\
\sum_{m=1}^M\frac{a_m}{1-|p_m|}&\le h_{\max},\label{eq:master-bibo}\\
\sum_{m=1}^M c_mp_m^k&=0,
\quad k=0,\ldots,r_d-1,\label{eq:master-rd}\\
G(1)&=D+\sum_{m=1}^M\frac{c_m}{1-p_m},\label{eq:master-g1}\\
\left|G(1)\right|&\le G_{\max}.\label{eq:master-dc}
\end{align}}
Here $a_m\ge |c_m|$ are epigraph variables, implemented as second-order cones for complex residues.  Monotone real-pole models add $c_m\ge0$ for $p_m\in(0,1)$.  All constraints above are linear, SOCP, or LMI representable after splitting complex variables into real and imaginary parts.  Hard pole priors do not appear as constraints because they are already encoded in the sampled dictionary.

\begin{algorithm}[t]
\caption{Constrained Randomized Atomic Identification}
\label{alg:rsa}
\begin{algorithmic}[1]
\State \textbf{Inputs:} data $(u,y)$; admissible pole region $\mathcal P$; priors; number of atoms $M$.
\State Sample $p_m$ i.i.d. in $\mathcal P$ and pair conjugates for real outputs.
\State Form the convolved design $y^\mathcal A_{t,m}=(\phi_m*u)(t)$ and the parallel realization with $A=\diag(p_m)$.
\State Solve the convex master problem with selected QP/SOCP/LMI constraints.
\State Prune active groups and optionally resample or refine locally around active poles.
\State \textbf{Return:} $\widehat h(t)=\sum_m\widehat c_mp_m^t$ and $\widehat G(z)=\widehat D+\sum_m\widehat c_m/(1-p_mz^{-1})$.
\end{algorithmic}
\end{algorithm}

\subsection{Pole localization and refinement}
The convex RAF stage chooses residues over a fixed sampled pole dictionary and therefore does not solve the continuous nonlinear pole-estimation problem.  The active RAF atoms should be interpreted as candidate pole regions, not as certified continuous pole estimates.  Pole localization from a finite sampled pole dictionary requires additional assumptions: informative data, sufficiently separated true poles, low noise, adequate signal strength, and favorable conditioning of the normalized input-filtered design.  Dictionary density alone is insufficient, since nearby disk-pole atoms become highly coherent over any finite horizon; a sparse solver may select one of several nearly equivalent atoms, or a small cluster of atoms, without uniquely identifying the true pole.

In practice, we use the sparse RAF solution as a convex initialization.  Active atoms are clustered in the pole plane, preferably using a disk-aware metric such as the pseudo-hyperbolic distance in~\eqref{eq:pseudohyp}.  Cluster representatives then initialize a local nonlinear pole/residue fit, or a prediction-error refinement, subject to the same stability and prior constraints.  The refined model should be validated by simulation error, frequency-domain fit, residual checks, and model-order diagnostics.  This two-stage strategy preserves convexity in the main selection step while using the active support only as a warm start for continuous refinement; exact model-order or pole recovery is not guaranteed by dictionary density alone.

\section{Computational and Modeling Considerations}\label{sec:computational}
This section collects practical considerations. The RAF estimator has three distinct approximation layers.  The first is the modeling approximation: the true impulse response is approximated by a superposition of stable first-order atoms.  The second is the random-feature approximation: a finite sampled dictionary approximates a continuous pole measure or atomic set.  The third is the statistical approximation: the coefficients are estimated from finite, noisy, and possibly poorly exciting data.  These layers should be kept conceptually separate.  Increasing the number of sampled atoms improves dictionary resolution but can worsen conditioning.  Increasing the data length improves statistical estimation but may not help identify modes that are not excited by the input.  Adding physical constraints can reduce variance and rule out spurious explanations, but overly restrictive priors can bias the result.

\subsection{Choice of sampling law and number of atoms}
The sampling law is the primary mechanism for encoding qualitative model knowledge.  A uniform distribution over the disk is rarely the best engineering choice.  For lightly damped mechanical or electrical systems, radii near the stability boundary and angles in expected resonance bands are often more important than strongly damped modes.  For thermal or diffusion-dominated systems, real positive poles and logarithmically spaced time constants may be more appropriate.  For systems with known bandwidth, the angular support can be restricted before optimization, reducing both computation and false discoveries.

The number of atoms $M$ controls a bias-variance-conditioning trade-off.  Small $M$ gives fast, well-conditioned regressions but may miss relevant poles.  Large $M$ improves coverage but creates coherent columns, especially when poles are close in pseudo-hyperbolic distance or when the input signal does not sufficiently distinguish their responses.  In practice, column normalization, group penalties, and iterative resampling are important.  One can start with a moderate global dictionary, solve the convex problem, retain active or nearly active regions, and resample locally.  This produces a multiresolution search over the disk without ever solving a large nonconvex pole-placement problem.

\subsection{Complexity}
For SISO data with $N$ samples and $M$ atoms, forming the convolved design matrix costs $O(NM)$ using direct recursions
\begin{equation}
x_m(t+1)=p_mx_m(t)+u(t+1),
\qquad X_{t,m}=x_m(t),
\end{equation}
up to conventions on initial conditions and delays.  The basic ridge or Lasso problem then scales like a standard dense regression in $M$ variables.  Additional SOCP constraints for epigraph variables, impulse settling, BIBO budgets, time-domain windows, and directional frequency bounds scale linearly with the number of atoms and grid points.  Exact MIMO frequency-gain constraints introduce LMIs whose size is determined by input/output dimension; directional SOCP relaxations are often preferable when the number of grid points is large.  Passivity constraints introduce a storage matrix whose dimension equals the number of active states in the parallel realization, so it may be advantageous to impose passivity after an initial sparsifying solve or on a pruned dictionary.

\subsection{Comparison with related methods}
RAFs should not be viewed as replacing kernel regularization or atomic norm methods; rather, they occupy a useful middle ground.  Compared with kernel regularization, RAFs make the features explicit and allow direct coefficient constraints, but they require a sampled dictionary and can be sensitive to dictionary coherence.  Compared with gridless atomic norm formulations, RAFs trade exact continuous convex geometry for a scalable finite problem that can incorporate many engineering constraints.  Compared with classical PEM, RAFs are less tied to a fixed parametric order and can exploit convex priors before nonlinear refinement.  Table~\ref{tab:method-comparison} summarizes this positioning.

\begin{table}[t]
\centering
\caption{Qualitative comparison of identification approaches.}
\label{tab:method-comparison}
\resizebox{\linewidth}{!}{%
\begin{tabular}{@{}lcccc@{}}
\toprule
Property & PEM & KRM & ANM & RAF \\
\midrule
Explicit modes & yes & no & yes & yes \\
Convex main fit & local & yes & yes & yes \\
Kernel/RKHS view & no & yes & partial & yes \\
Continuous pole search & yes & implicit & yes & refinement \\
Hard side constraints & limited & moderate & moderate & broad \\
Random-feature scalability & no & no & partial & yes \\
\bottomrule
\end{tabular}}
\end{table}

\subsection{MIMO and structured extensions}
For MIMO systems, the sampled pole set can be shared across channels while residues become matrices.  A common-pole model has
\begin{equation}
G(z)=D+\sum_{m=1}^M \frac{C_m}{1-p_mz^{-1}},
\end{equation}
where $C_m$ is an output-by-input residue matrix.  Group penalties on $\|C_m\|_F$ promote a small number of shared modes, while channel-specific penalties allow different sparsity patterns across inputs and outputs.  Frequency-domain gain constraints and passivity constraints are naturally matrix-valued in this setting.  The same disk-kernel viewpoint applies, but the scalar measure is replaced by an operator-valued or matrix-valued measure when one interprets the full MIMO covariance structure.  In the finite RAF implementation, however, the optimization remains a structured regression over residue matrices.

\subsection{Limitations}
The framework has several limitations.  First, the quality of the finite dictionary matters: no sparse regularizer can identify modes outside the sampled region, and local refinement can only help when the active support provides a meaningful warm start.  Second, highly coherent disk-Vandermonde or input-filtered columns can lead to unstable support estimates even when prediction error is small.  Third, active atoms are not certified pole estimates; pole localization requires additional separation, excitation, signal-strength, and noise assumptions.  Fourth, the current theory is clearest for linear time-invariant impulse responses; nonlinear, LPV, and closed-loop extensions require additional assumptions on scheduling variables, excitation, and noise.  Fifth, hard priors can introduce bias.  The appropriate interpretation is therefore set-membership or physics-guided identification rather than purely data-driven model discovery.  These limitations are also opportunities: adaptive sampling, posterior pole distributions, robust conic formulations, and closed-loop instrumental-variable versions are natural extensions.

\section{Numerical Example}\label{sec:numerical}
We illustrate the method on a SISO system composed of two underdamped modes.  The system is excited by a low-bandwidth input, and 100 samples are collected.  The output is corrupted by additive white Gaussian noise with SNR approximately 30 dB.  Preliminary prediction-error fits reveal that the data are not sufficiently informative to identify the true impulse response uniquely: many stable models fit the measured output but extrapolate differently and recover different pole locations.

This is the intended use case for physics-informed RAF identification.  We impose pole-sector priors with radii in $[0.85,0.95]$ and angles in $[0.3,1]$, a BIBO budget $h_{\max}=55$, a DC-gain bound $G_{\max}=4$, and the frequency-response gain mask shown in Fig.~\ref{fig:gammabound}.  These priors need not all be used simultaneously; rather, they demonstrate how additional knowledge can be layered into the same convex program.

\begin{figure}
\centering
\includegraphics[width=0.99\linewidth]{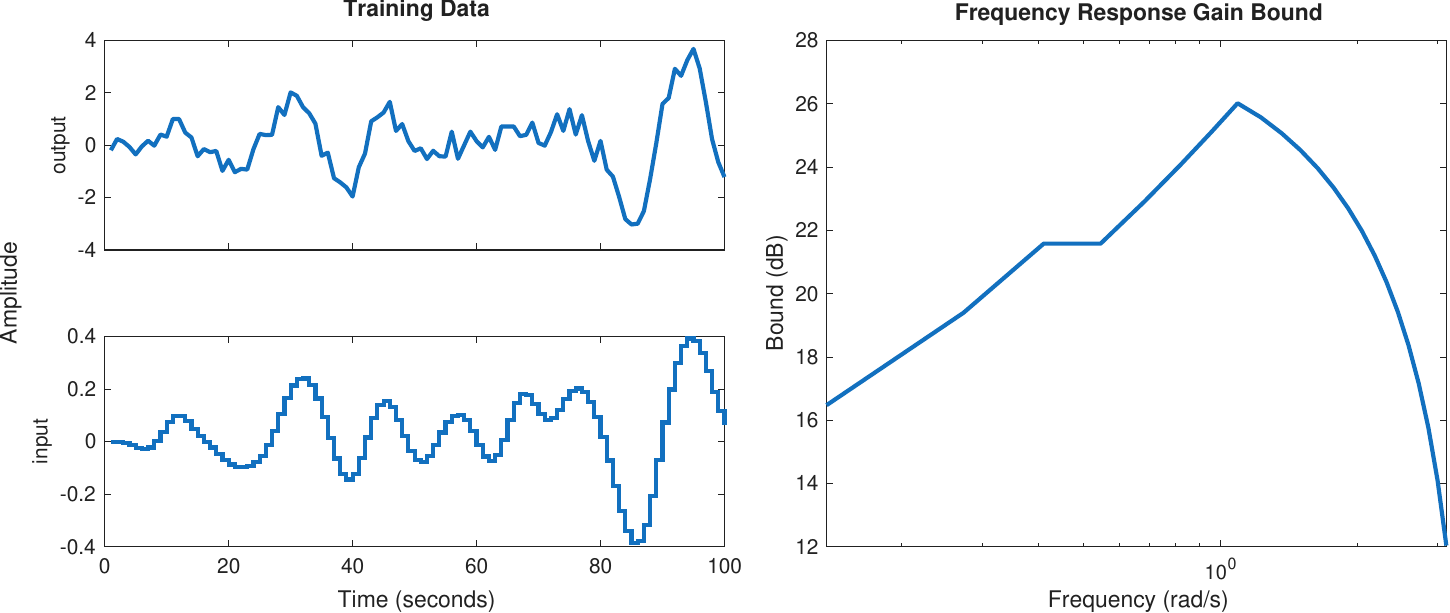}
\caption{Data and bound prior. Left: training input/output data. Right: frequency-response gain bound.}
\label{fig:gammabound}
\end{figure}

We compare four workflows.  PEM denotes conventional prediction-error modeling with a stability prior and order selected using Hankel-matrix rank heuristics \citep{Ljung:book99}.  KRM denotes tuned-correlated kernel regularized estimation \citep{chen2014maximum}, with a balanced-reduction variant labeled KRM-RED.  ANM denotes the randomized atomic norm method based on fully corrective Frank--Wolfe updates \citep{MillerATOM:2020}.  RAF denotes the proposed constrained randomized atomic feature estimator.

With only a stability prior, KRM gives the best output fit among the randomized feature methods, while ANM and RAF fit the data but may not recover the true impulse response.  This is expected because the data are poorly exciting and the dictionary is underconstrained.  Adding pole-sector information and sparsity improves the modal recovery for both ANM and RAF.  RAF then permits additional convex constraints, such as DC gain, BIBO budget, and frequency-response masks, which further improve the recovered impulse response and yield more informative candidate pole regions.  The results are summarized in Fig.~\ref{fig:results}.

\begin{figure*}
\centering
\includegraphics[width=0.99\linewidth]{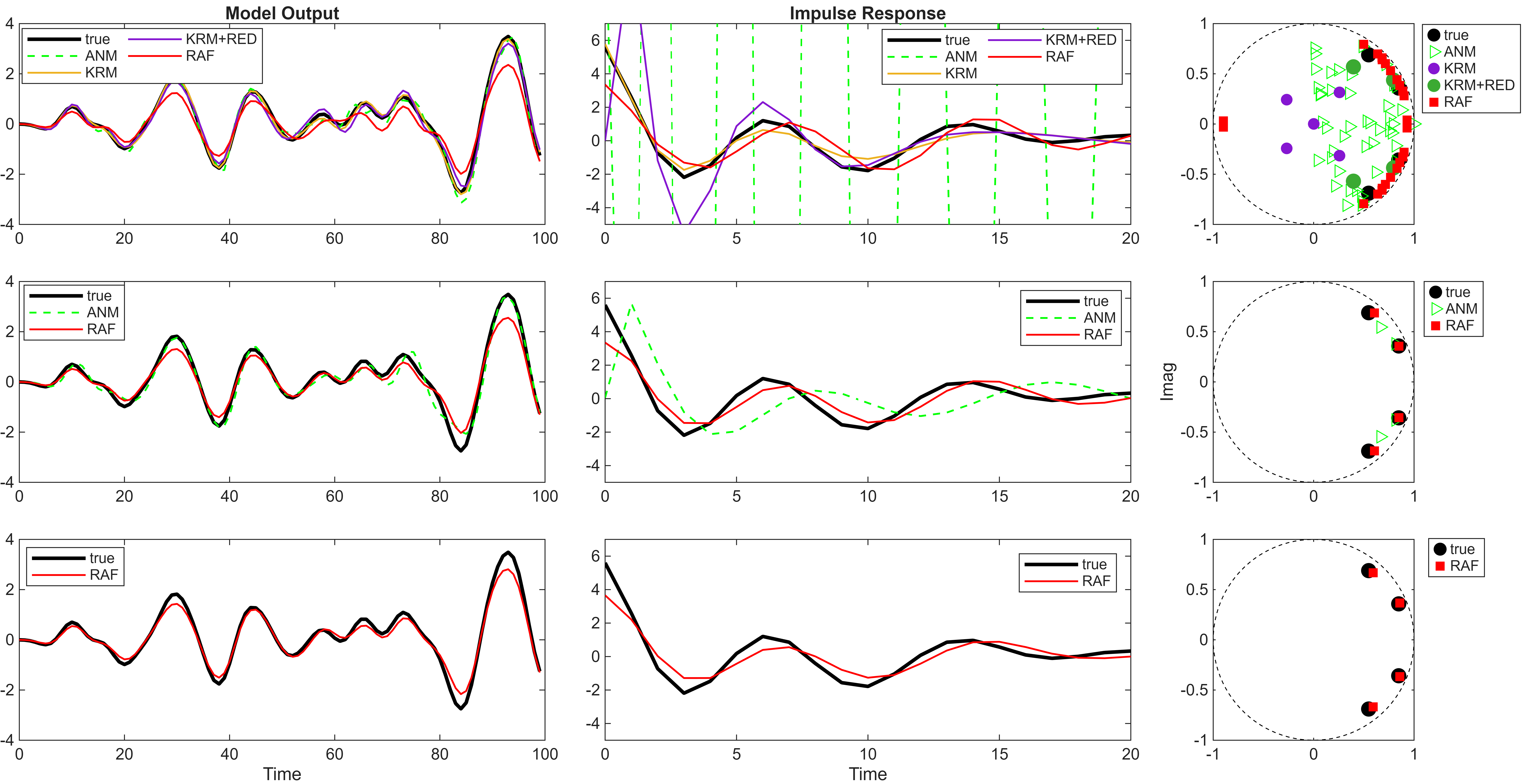}
\caption{Fitting results. Top row: stability prior only. Middle row: pole-sector constraints plus sparsity. Bottom row: RAF with all priors imposed.}
\label{fig:results}
\end{figure*}

The example is deliberately small and illustrative; it is not intended as a statistical benchmark or as evidence of guaranteed pole recovery.  It nevertheless highlights an important point.  RAF is not merely a randomized approximation to a kernel estimator.  It is a modeling language for combining modal dictionaries with convex side information.  When data are rich, the method behaves like a scalable sparse modal estimator.  When data are poor, the ability to impose hard engineering priors can be more important than the choice of loss function.

\section{Discussion}\label{sec:discussion}
The Disk--Bochner theorem above explains the kernel generated by RAF atoms without changing the practical algorithm.  The finite RAF model uses $A=\diag(p_1,\ldots,p_M)$, which is normal, and therefore its kernel is generated by an atomic positive disk measure.  For arbitrary kernels, the radius defect gives a contractive canonical shift; the additional subnormality condition is exactly what permits a scalar positive measure over pole locations.

The Nevanlinna--Pick connection is a transfer-function statement in the disk-analytic delay variable $\zeta=z^{-1}$.  Gain normalization maps a bounded transfer function $H(\zeta)$ to a Schur function $F(\zeta)=H(\zeta)/\gamma$, and the Pick matrix together with the LFT parametrization describes all normalized transfer functions satisfying the interpolation and gain constraints.  Pole-radius normalization is a separate structural operation: it acts on the pole coordinate $p$ and yields the radius defect $\rho^2K(s,t)-K(s+1,t+1)\succeq0$ for the pole-moment kernel.  RAF combines these two ingredients by restricting the modal search space through sampled poles while certifying gain or interpolation constraints in the appropriate transfer-function coordinate.

The sparse-recovery discussion is intentionally conditional.  Existing atomic-norm and gridless sparse-recovery theory for exponential models provides important intuition, especially for well-separated angular frequencies on the unit circle \citep{Bhaskar2012Atomic,Chi2015Compressive,Tang2013b}.  The present RAF setting is different: it fixes a finite set of candidate poles, uses the input-filtered design $X_{t,m}=(\phi_m*u)(t)$, and may include additional priors that deliberately bias the model.  Rigorous localization would require radial-angular separation in the disk geometry, sufficiently informative inputs, noise control, signal-strength conditions, and favorable conditioning or incoherence of the realized design.  In applications, this suggests using sampling designs that avoid clusters in pseudo-hyperbolic geometry, monitoring column coherence after convolution with the measured input, and treating active atoms as warm starts rather than certified continuous pole estimates.

\section{Conclusions}\label{sec:concl}
Randomized atomic features provide a practical bridge between kernel methods and explicit modal dictionaries for physics-informed system identification.  By sampling stable poles in the disk and fitting sparse residues with convex constraints, RAF models retain modal interpretability while supporting stability margins, frequency masks, DC-gain bounds, monotonicity, passivity, relative degree, and time-domain error envelopes.  Positive disk measures generate PSD kernels with a radius-dependent shift defect and stable RKHS embeddings; the corresponding scalar disk representation for arbitrary kernels is characterized by subnormality of the canonical shift.  The resulting framework connects randomized features, atomic gauges, stable kernels, and Schur/NP--LFT set-membership in a way that is directly useful for constrained identification.  Future work will extend the statistical theory for disk-Vandermonde designs, develop adaptive sampling laws, and apply the framework to MIMO, closed-loop, and nonlinear LPV identification.

\appendix
\section{Additional Operator Details}\label{app:operator}
The canonical construction in Definition~\ref{def:canonical} is unique up to unitary equivalence.  If another cyclic pair $(A,b)$ satisfies $K(s,t)=\langle A^sb,A^tb\rangle$ and $\overline{\operatorname{span}}\{A^tb:t\ge0\}$ is the whole state space, then the map $e_t\mapsto A^tb$ extends to a unitary intertwining the two realizations.  Therefore subnormality of the canonical shift is an intrinsic property of the kernel, not an artifact of a chosen realization.

In finite dimensions, normality is equivalent to unitary diagonalizability.  If $A=V\diag(p_1,\ldots,p_n)V^*$ and $b$ is any vector, then
\begin{equation}
\langle A^sb,A^tb\rangle=\sum_{j=1}^n |(V^*b)_j|^2p_j^s\bar p_j^{\,t}.
\end{equation}
This is an atomic disk measure.  A Schur triangular form $A=QTQ^*$ is not enough: the strictly upper triangular part of $T$ produces polynomial factors multiplying $p^t$, which cannot generally be absorbed into positive scalar atoms.

\section{Implementation Notes}\label{app:implementation}
For complex-valued dictionaries, one may either work directly with complex conic solvers or split all variables into real and imaginary parts.  The epigraph constraint $a_m\ge |c_m|$ becomes
\begin{equation}
\|(\Re c_m,\Im c_m)\|_2\le a_m.
\end{equation}
Group penalties over conjugate pairs are handled similarly.  When real-valued impulse responses are required, poles are sampled in conjugate pairs and the residues are constrained to be conjugates, or an equivalent real sinusoidal basis is used.  The parallel realization can be ill-conditioned when many poles are close; in that case column normalization and post-solve residue rescaling are recommended before applying sparsity penalties.

\bibliography{ifacconf}

\end{document}